\newtheorem{theorem}{Theorem}
\newtheorem{lemma}[theorem]{Lemma}
\newtheorem{remark}[theorem]{Remark}
\newtheorem{definition}[theorem]{Definition}
\newtheorem{notation}[theorem]{Notation}
\newcommand{\pr}{\mathsf{pr}}
\newcommand{\Z}{\mathbb{Z}}
\newcommand{\R}{\mathbb{R}}
\newcommand{\E}{\mathsf{E}}
\newcommand{\const}{{\rm const}}
\newcommand{\reff}[1]{(\ref{#1})}
 \newcommand{\ovl}[1]{\overline{#1}}
\newcommand{\wt}[1]{\widetilde{#1}}
\newcommand{\wh}[1]{\widehat{#1}}
\newcommand{\Proof}{\noindent\emph{Proof \ }}
\newcommand{\ba}{\begin{array}}
\newcommand{\s}{\wt{s}}
\newcommand{\Int}{{\sf{Int}}}
\newcommand{\Ext}{{\sf{Ext}}}
\newcommand{\supp}{{\rm{supp}}}
\newtheorem{proposition}{Proposition} \oddsidemargin=0pt
\begin{document}
\date{}

\title{Phase Transition of Laminated Models at Any Temperature}

\author{Eugene Pechersky\thanks{The work of E.P.\ was partly supported 
by Funda\c{c}\~{a}o de
Amparo \`{a} Pesquisa do Estado de S\~{a}o Paulo (FAPESP), the
grant $2008/53888-0$, and Russian Foundation for Basic Research
(RFFI) by the grants 08-01-00105 and 07-01-92216}, 
Elena Petrova\thanks{Partially supported by RFBR grant no.\ 09-07-00154-a}
 and Sergey Pirogov\thanks{Partially supported by RFBR grant no.\ 09-07-00154-a}\\
Institute for Information Transmission Problems}

\maketitle

\begin{abstract}
The standard Pirogov -- Sinai theory is generalized to the class of models with two modes 
of interaction: longitudinal and transversal. Under rather general assumptions about the 
longitudinal interaction and for one specific form of the transversal interaction it is proved 
that such system has a variety of phase transitions at any temperature: the parameter which 
plays the role of inverse temperature is the strength of the transversal interaction. 
The concrete examples of such systems are $(1+1)$-dimensional models.

\end{abstract}

\section{Introduction}
\label{secintro}

This work is devoted to an extension of
the results of the primary paper \cite{PS} concerning phase transitions in
lattice models which gave rise to a great amount of papers on the subject
that nowadays is being referred to as the Pirogov--Sinai theory (PS-theory).
 In  \cite{PS} the authors
considered a class of Gibbs lattice models
with a finite spin space. The main condition for phase transitions
to take place is the existence of a finite number of periodic ground states
satisfying the Peierls condition. The authors constructed the complete phase diagram
for low temperatures in the space of parameters of the model.
The Peierls condition guarantees the existence of
energetic barriers between the ground states, and every
temperature phase falls in a domain of influence of one of the
ground states.

In this paper we address the problem whether there can be several
phases at any fixed temperature.
Our idea is to stack infinitely many identical $d$-dimensional models, each
with the interaction that satisfies the Peierls condition
(we call these models \textit{the horizontal layers or the horizontal models}) making them interact
 with each other with  a Potts energy (we call this \textit{a vertical
interaction or a vertical model}).
We prove that such a model (which we call \textit{a
laminated model}) has a phase transition for any given positive
temperature if the parameter of the Potts interaction $\lambda$ is
sufficiently large (depending on the temperature).

In Section 2 we introduce the model. Section 3 contains the formulation of the main result. 
Section 4 is devoted to the proof. Section 5 studies the $(1+1)$-dimensional case. The phase diagram 
is obtained for any $(1+1)$-dimensional laminated model with finite number of periodic ground states.

\section{Definitions}
\label{secdefi}

Let $\Z^{d+1}$ be a $(d+1)$-dimensional integer lattice. We define
\textit{the distance\/} on the lattice as follows: for $i=(i_1,\ldots ,i_{d+1}),\, 
j=(j_1,\ldots ,j_{d+1}) \in \Z^{d+1}$
\begin{equation}
\label{metric}
d(i,j) =\max_{1\leq k \leq d+1} |i_{k} - j_{k} |.
\end{equation}
Two sets $K,
 L\subseteq\Z^{d+1}$ are called \textit{distant\/} if $d(K,L)>1$.
For any finite $ B \subset \Z^{d+1}$ denote by $|B|$
 the number of sites  $i \in B.$ We say that a set $ B \subset \Z^{d+1}$
is \emph{connected\/} if for any $i,j \in B$ there exists a sequence
of sites $i_1, \ldots , i_n \in B$ such that $d(i,i_1)=1$, $d(i_1,i_2)=1$,
\ldots , $d(i_n,j)=1$.

Let $S$ be a finite set (the spin space) and
 $\mathcal{S}=S^{Z^{d+1}}$ be the set of all configurations, i.e.\ all
maps $s:\Z^{d+1}\to S $.
Denote by $\mathsf{pr} (s,V)$  a restriction of a configuration  $s$
to the set $V \subset \Z^{d+1}$.

For any $t \in \Z$,
the set $\Z^{d}\times\{ t\} \subset \Z^{d+1}$ is called \textit{a horizontal layer}
and is denoted by  $ \Z^{d}_{t}$.

\begin{definition}
We say that two configurations $s' , s''$ \textit {are
equal almost everywhere\/}  if the set
$\{i\in\Z^{d+1}:\:s'(i)\neq s''(i)\}$ is finite.
\end{definition}

The interaction in the laminated model is a composition of a `horizontal' $d$-dimensional 
interaction which involves the sites within a horizontal layer (with the same Hamiltonian 
$H_g$ for each layer) and a one-dimensional `vertical' interaction which acts along the 
$d+1$st direction (with the same Hamiltonian $H_v$ for any `vertical axis'  $\{(i,t),\, i \in\Z^{d}  
\mbox { fixed}, t \in \Z\}$. 
Now we are going to define the horizontal and vertical Hamiltonians, $H_g$ and $H_v$.

\subsection{Horizontal model}
\label{subsechor}

The horizontal model is defined on a $d$-dimensional lattice $\Z^{d}$. Throughout the paper we will
use the same notation $s$ for a configuration on  $\Z^{d}$ if it does not cause misunderstanding.
The Hamiltonian of the horizontal model, $H_g$, is defined by a collection of finite range potential functions
$\Phi_{i}^g(s)=\{\Phi_{i}^g(s(j),|j-i| \leq R) \}$.
The function $\Phi_{i}^g$ determines the energy of interaction
between  the site $i\in \Z^{d}$ with all the other sites. We suppose that
the interaction is of a finite range, and denote by $R$ the radius
of interaction.  We consider the case when  $\Phi_{i}^g$ are
periodical functions, i.e.\ invariant with respect to the action of some subgroup
$\widehat \Z \subset \Z^{d}$ of finite index.
Thus the energy $H_g(s)$ of a configuration $s$ is
formally defined by
\begin{equation}
\label{eq2} H_g(s)= \sum_{i \in \Z^{d}} \Phi_{i}^g(s).
\end{equation}
However, the difference
$$H_g(s' , s'') = H_g(s')-H_g(s'')= \sum_{i \in \Z^{d}} \Phi_{i}^g(s')-
\Phi_{i}^g(s''),
$$
which is called a
\textit{relative Hamiltonian\/} is well defined if
the  configurations $s' , s''$ are equal  almost everywhere.

\begin{definition}
A configuration $s \in S^{Z^{d}}$
is called \textit{periodic\/} if it is invariant with respect to a subgroup
$\widehat \Z \subset \Z^{d}$ of a finite index. 
\end {definition}
\begin{definition}
A periodical configuration $a$ is called a \textit{ground state\/} for Hamiltonian $H_g$ if for any
configuration  $s$ that is almost everywhere equal to $a$ it holds
\begin{equation}
\label{eq3} H_g(s,a) \geq 0.
\end{equation}
\end{definition}

\begin{definition}
\label{defi2}
Assume that for the Hamiltonian $H_g$ there exist a finite number of
periodical ground states  $S(H)=\{ s_{1}, \ldots , s_{r}\} .$
Choose a subgroup $\widehat \Z \subset \Z^{d}$ of a finite index
such that all the ground states are invariant with respect to
$\widehat \Z$. Let $N= \bigl(\Z^{d} : \widehat \Z \bigr)$.
We denote by
 $W_{N}(i)$ a cube of a size $2N$ centered at $i$,
$W_{N}(i)=\{ j : d(i,j) \leq N \}.$

 A cube $W_{N}(i)$ is called \emph{frustrated} for a
configuration $s$ if $\mathsf{pr} \bigl(s,W_{N}(i)\bigr) \neq
\mathsf{pr} \bigl(s_{q},W_{N}(i)\bigr)$ for all $q=1, \ldots , r.$
A union of all frustrated cubes $B_g(s)$ for   $s$ is called
\emph{the horizontal boundary} of configuration $s$.
\end{definition}

\begin{definition}
We say that a horizontal Hamiltonian $H_{g}$ that has a finite number of the ground states,
$s_{1}, \ldots , s_{r},$  satisfies the Peierls condition if
there exists a positive constant $\theta$ such that for any
$q=1, \ldots , r$, for any configuration $s$ that is almost everywhere
equal to  $s_{q}$ it holds
\begin{equation}
\label{eq5}  H_g(s,s_{q}) \geq \theta |B_g(s)|.
\end{equation}
\end{definition}

\subsection{Vertical model}
\label{subsecvert}

The vertical model is defined on a one-dimensional lattice $\Z$
with the same spin space $S$. The Hamiltonian of the model is defined by
the following potential function
\begin{equation}\label{pvert}
\Phi^v(s_1,s_2)= \lambda (1 - \delta (s_1,s_2)),
\end{equation}
where $s_1,s_2\in S$, $\lambda> 0$ and $\delta(s_1,s_2)$ is the
Kronecker symbol. The formal vertical Hamiltonian is defined by a nearest neighbour
interaction:
\begin{equation}\label{Hvert}
H_v(s)=\frac{1}{2}\sum_{t\in \Z}[\Phi^v(s(t),s(t+1))+ \Phi^v(s(t-1),s(t))].
\end{equation}

\subsection{Laminated model}
\label{sebseclam}

The model of our studies is defined on $\Z^{d+1} = \Z^{d} \times \Z$, with
the configuration space $\mathcal{S}=  \Z^{d+1}$ and with anisotropic
potential
\begin{equation}
\label{eq3l}
\Phi_{i,t}(s)=\Phi_i^g(\pr(s, \Z^{d}))+\frac{1}{2}\left[\Phi^v(s(i,t),s(i,t+1))
+\Phi^v(s(i,t),s(i,t-1))\right], \;\; (i,t) \in  \Z^{d+1}.
\end{equation}
Thus the formal Hamiltonian of the laminated model is
\begin{equation}\label{ham}
H(s)=\sum_{(i,t)\in  \Z^{d+1}}
\Phi_{i,t}(s).
\end{equation}
The relative Hamiltonian is defined as
\begin{equation}\label{hamr}
H(s,s')=\sum_{ (i,t)\in  \Z^{d+1}} \left[\Phi_{i,t}(s)-\Phi_{i,t}(s')\right].
\end{equation}

The following obvious proposition describes the properties of the Hamiltonian
of a laminated model.
\begin{proposition}
Let a horizontal Hamiltonian $H_{g}$ have a
finite number of periodical ground states, $s_{1}, \ldots ,
s_{r}$ on $\Z^{d}$.  Then the Hamiltonian of a
laminated model $H$ has the same number of periodical
ground states, $\bar s_{1}, \ldots ,\bar s_{r}$ on $\Z^{d+1}$, and for
any $q=1,\ldots ,r,$ for any $t\in \Z$,
$\bar s_{q}(i,t)=s_{q}(i)$. If all the ground states of
$H_{g}$ are invariant with respect to a subgroup $\wh\Z$, then
all the ground states of $H$ are invariant with respect to
$\wh\Z\times \Z$.
\end{proposition}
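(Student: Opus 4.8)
The plan is to exploit the additive splitting of the laminated interaction into one copy of $H_g$ per horizontal layer and one copy of $H_v$ per vertical axis: from \reff{eq3l} and \reff{ham}, for configurations $s,s'$ equal almost everywhere one has
\[
H(s,s')=\sum_{t\in\Z}H_g\bigl(\pr(s,\Z^{d}_{t}),\pr(s',\Z^{d}_{t})\bigr)+\sum_{i\in\Z^{d}}H_v\bigl(\pr(s,A_i),\pr(s',A_i)\bigr),
\]
where $A_i=\{(i,t):t\in\Z\}$ is the vertical axis over $i$ and only finitely many summands are nonzero. First I would verify that each $\bar s_{q}$ is a ground state: taking $s$ equal almost everywhere to $\bar s_{q}$, every horizontal summand is $H_g(\pr(s,\Z^{d}_{t}),s_q)\ge0$ because $s_q$ is a ground state of $H_g$, and every vertical summand reduces to $\sum_t\Phi^v(s(i,t),s(i,t+1))\ge0$ because $\Phi^v\ge0$ and $\pr(\bar s_{q},A_i)$ is constant (so its vertical energy vanishes); adding these gives $H(s,\bar s_{q})\ge0$. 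That $\bar s_{q}$ is periodic and, when $s_q$ is invariant under $\wh\Z$, invariant under $\wh\Z\times\Z$, follows at once from $\bar s_{q}(i,t)=s_q(i)$; combined with the converse below this will give the final assertion of the proposition.

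For the converse, I would take a periodic ground state $a$ of $H$ and first show that it is constant along every vertical axis; this is the step I expect to be the main obstacle, and I plan to handle it by the standard energy-per-unit-volume estimate. For periodic configurations the energy per site $e(\cdot)$ is well defined and, by the splitting, decomposes as $e=e_{\mathrm{hor}}+e_{\mathrm{ver}}$; moreover, inserting into an arbitrarily large box a periodic configuration of strictly smaller energy density produces a finite perturbation whose relative energy is a negative volume term plus a lower-order surface term, contradicting the ground-state property. Setting $g_0:=e_{H_g}(s_1)$, the ground-state property of $a$ yields $e(a)\le e(\bar s_{1})=g_0$ (as $\bar s_{1}$ has all layers equal to $s_1$ and constant axes); the same estimate carried out inside $\Z^{d}$, using that $s_1$ is a ground state of $H_g$, shows that every layer $\pr(a,\Z^{d}_{t})$ has horizontal energy density $\ge g_0$, so $e_{\mathrm{hor}}(a)\ge g_0$; and $e_{\mathrm{ver}}(a)\ge0$ since $\Phi^v\ge0$. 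Hence $g_0\ge e(a)=e_{\mathrm{hor}}(a)+e_{\mathrm{ver}}(a)\ge g_0$, which forces $e_{\mathrm{ver}}(a)=0$, i.e.\ $a(i,t)=a(i,t+1)$ for all $(i,t)$ by periodicity, so $a=\bar b$ with $b:=\pr(a,\Z^{d}_{0})$ a periodic configuration on $\Z^{d}$.

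It then remains to identify $b$ with one of the $s_q$, i.e.\ to show that $b$ is a ground state of $H_g$. Given $b'$ equal to $b$ almost everywhere, with $D:=\{i:b'(i)\ne b(i)\}$ finite, and an integer $K\ge0$, I would take the perturbation $s$ of $a=\bar b$ that replaces the layers $t=0,1,\dots,K$ by $b'$; the splitting makes the computation routine and gives $H(s,a)=(K+1)H_g(b',b)+2\lambda|D|$ — one copy of $H_g(b',b)$ for each of the $K+1$ altered layers, plus $2\lambda$ from each axis $A_i$ with $i\in D$, on which only the two bonds at heights $(-1,0)$ and $(K,K+1)$ become unsatisfied. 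Since $a$ is a ground state this is $\ge0$ for every $K$, so $H_g(b',b)\ge-2\lambda|D|/(K+1)$, and letting $K\to\infty$ gives $H_g(b',b)\ge0$; as $b'$ was arbitrary, $b$ is a ground state of $H_g$, whence $b=s_q$ and $a=\bar s_{q}$ for some $q$. This identifies the periodic ground states of $H$ with the pairwise distinct configurations $\bar s_{1},\dots,\bar s_{r}$, so their number is $r$, as claimed.
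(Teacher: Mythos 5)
Your proof is correct. Note that the paper itself offers no argument here: the proposition is announced as ``obvious'' and left unproved, so there is no authorial proof to compare against. Your write-up supplies exactly what is missing. The easy direction (each $\bar s_q$ is a ground state of $H$) follows from the layer/axis splitting and $\Phi^v\ge 0$, as you say. The substantive content is the converse --- that $H$ has \emph{no other} periodic ground states --- and your two steps handle it cleanly: the specific-energy comparison $g_0\ge e(a)=e_{\mathrm{hor}}(a)+e_{\mathrm{ver}}(a)\ge g_0$ forces $e_{\mathrm{ver}}(a)=0$, hence vertical constancy by periodicity and nonnegativity of $\Phi^v$; and the perturbation of $K+1$ consecutive layers, with its fixed surface cost $2\lambda|D|$ amortized as $K\to\infty$, shows the common layer $b$ is a ground state of $H_g$ and hence one of the $s_q$. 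Both steps use only the finite range and boundedness of the potentials and are standard, but they are genuine arguments rather than trivialities, so recording them is a worthwhile addition.
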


Further on we drop the bar in the notations of the ground states of the laminated
model: $s_1,\ldots ,s_r$.


We recall the
definition of a Gibbs state.

\begin{definition}
Given  a finite volume $V\subset\Z^{d+1}$ and a configuration
$\ovl s \in {\cal S}$, consider a set of configurations ${\cal S} (\ovl s,V)
\subset {\cal S}$ consisting of all configurations $s' \in {\cal S}$
such that $s'(i)=\ovl s(i)$ for all $i \notin V$. Given a Hamiltonian $H$,
a probability distribution $P_{\ovl s, V}$ on ${\cal S}(\ovl s,V)$
 is called \textit{a Gibbs distribution in volume\/} $V$
\textit{with boundary conditions\/} $\ovl s$ if for
any two configurations $s', s'' \in {\cal S} (\ovl s,V)$
\begin{equation}\label{spec}
\frac{P_{\ovl s,V}(s')}{P_{\ovl s, V}(s'')}
=\exp(-\beta H(s',s'')),
\end{equation}
where $\beta>0$ is the parameter of the model (inverse temperature).
\end{definition}
\begin{definition}
A  probability distribution $P$ on $\cal S$ is called \textit{a limiting Gibbs distribution\/}
for a Hamiltonian $H$ and parameter $\beta$ if for any finite $V \subset \Z^{d+1}$
 its conditional probabilities
$$P_{V} (s(i),\, i \in V \mid s(i)=\ovl{s}(i), \, i \notin V)$$ are equal to
$P_{\ovl s, V}(s(i),\, i \in V)$ $P$-a.s.

A limiting Gibbs distribution is called \textit{a pure thermodynamical phase\/}
if it is periodic and is an extreme point of the convex set of all periodic
limiting Gibbs distributions.
\end{definition}

\section{Main result}
\label{secps}

From now on we fix some horizontal Hamiltonian $H^{0}$ that satisfies
the Peierls condition. Let $\Phi^0$ be the corresponding potential.
Let us consider the family
 of horizontal Hamiltonians given by
\begin{equation}\label{perturb} H_g=H^0+\sum_{k=1}^{r-1}\mu_{k}H_{k},
\end{equation}
 where the Hamiltonians $H_{1}, \ldots , H_{r-1}$ are of a finite
range with the same radius $R$ and $\mu_{k}\in \R$ are the parameters of the model. Throughout 
the paper we assume that   $\mu_{k}$ are sufficiently small in absolute value. The exact condition
will be given later (see (\ref{Pp}) and (\ref{Ppp})).

We assume that the family (\ref{perturb}) is non-degenerate in the sense that the
matrix
\begin{equation*}
 \left(%
\begin{array}{ccccc}
  e_{1}^{1} & \dots & e_{1}^{r} \\
  \vdots  & \dots &  \vdots \\
  e_{r-1}^{1} & \dots & e_{r-1}^{r} \\
  1 & \dots  & 1
\end{array}%
\right)
\end{equation*}
of the specific energies $e_{k}^{q}=
\lim_{V\rightarrow\infty}H_{k,V}(s_{q})/|V|$ completed by
the constant row is non-degenerate. Here $H_{k,V}(s)$ is the
energy of configuration $s$ in a finite
volume $V$ corresponding to Hamiltonian $H_k$,
$$
 H_{k,V}(s)= \sum_{i \in V} \Phi_{i}^{(k)}(s).
$$

From now on we consider the laminated model with the Hamiltonian
 $H(\mu)$ which is determined by (\ref{eq3l}), where
\begin{equation}
\label{add1}
\Phi^g = \Phi^0 + \sum_{k=1}^{r-1} \mu_k \Phi^{(k)}
\end{equation} 
and $\Phi^v$ is given by (\ref{pvert}). We will say that the Hamiltonian
$H(\mu)$ is generated by the horizontal Hamiltonian 
 \reff{perturb} and the vertical Hamiltonian 
 \reff{Hvert}. 

Fixing  $\beta$ we will  vary the vertical interaction $\lambda$.

Our main result is the following theorem.
Define by $$O_r=\{a=(a_1,\ldots ,a_r):\:\min a_q=0\}$$ the non-negative octant in $\R^r$.
\begin{theorem}
\label{thz}
Consider a family of Hamiltonians $H(\mu)$ generated by a horizontal Hamiltonian
(\ref{perturb}) and the vertical Hamiltonian 
 \reff{Hvert}, 
where  $H^{0}$ is a horizontal Hamiltonian (\ref{eq2}) with $r$ ground states,
 $s_{1}, \ldots , s_{r}$, which satisfies the Peierls condition (\ref{eq5}),
 $H_{1}, \ldots ,  H_{r-1}$ is a collection of horizontal Hamiltonians
such that the family (\ref{perturb}) is non-degenerate. Then for any
 $\beta > 0$ there exists $\lambda_{0}(\beta) > 0$ such that for any
 $\lambda > \lambda_{0}(\beta)$ there is a neighbourhood $U$ of the origin in the space
$\R^{r-1}$ of parameters $\mu = (\mu_{1} ,  \ldots ,\mu_{r-1} )$ and a 
homeomorphism
 $J(\beta, \lambda): U \to A$ of $U$  onto a neighbourhood $A$ of the origin in non-negative octant
$O_{r} \subset \R^{r},$ 
such that for the Hamiltonian  $H(\mu)$ 
with $\mu=(\mu_{1}, \ldots , \mu_{r-1}) \in U,$
there exist different pure thermodynamical phases  (for given $\beta$), each phase 
corresponds to that $q$ for which $a_{q}=0$, where $a=(a_{1}, \ldots , a_{r})= J(\beta , \lambda)\mu $.
\end{theorem}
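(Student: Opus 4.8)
The plan is to run a Pirogov--Sinai contour argument for the laminated model, treating $\lambda$ (not $\beta$) as the large parameter that supplies the energetic barriers. First I would introduce contours: given a configuration $s$ on $\Z^{d+1}$ that is almost everywhere equal to some ground state, call a column $\{(i,t):t\in\Z\}$, $i$ fixed, \emph{homogeneous} if $s(i,t)$ does not depend on $t$, and split the support of $s$ into the horizontal boundary $B_g$ (copied through all layers via the frustrated cubes of Definition~\ref{defi2}, applied layerwise) and the vertical ``breaks'' where some column fails to be homogeneous. The key geometric observation is that every vertical break of a column $i$ at height $t$ costs at least $2\lambda$ in $\Phi^v$ (two disagreeing nearest-neighbour pairs along the axis), and this cost is \emph{proportional to the area} of the corresponding piece of the contour when we project onto $\Z^d$. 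Combining this with the horizontal Peierls bound (\ref{eq5}) applied in each layer that the contour touches, I expect to obtain a Peierls-type estimate $H(s,s_q)\ge c\,(\theta+\lambda)\,\Vert\Gamma\Vert$ for a suitable size functional $\Vert\Gamma\Vert$ of the contour configuration $\Gamma$, where the constant $c$ depends only on $d$, $N$ and $R$. This is the step where laminated structure genuinely helps: a contour of bounded horizontal size but large vertical extent is automatically expensive.

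Next I would carry out the standard PS machinery with the effective inverse temperature $\beta\lambda$ in the role of the small parameter. Since $\beta$ is fixed and $\lambda\to\infty$, the weight of a contour $\Gamma$ is bounded by $\exp(-c\beta(\theta+\lambda)\Vert\Gamma\Vert)$, so for $\lambda$ large the usual cluster-expansion convergence criterion (Peierls sum $\sum_{\Gamma\ni 0}e^{-\beta\lambda\,\text{const}\,|\Gamma|}<\varepsilon$) is met uniformly. One then defines, for each $q$, the restricted partition function $\Xi_q(V)$ with boundary condition $s_q$, the truncated contour functional, and the metastable free energies $h_q=h_q(\mu,\beta,\lambda)$ via the standard recursion; stability of the phase $q$ is the condition $h_q=\min_{q'}h_{q'}$, equivalently $a_q:=h_q-\min_{q'}h_{q'}=0$. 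The non-degeneracy of the matrix of specific energies $(e_k^q)$ together with the constant row is exactly what makes the map $\mu\mapsto(a_1,\ldots,a_r)$ a local homeomorphism onto a neighbourhood of the origin in $O_r$: to first order $h_q(\mu)=h_q(0)+\sum_k\mu_k e_k^q+O(\text{small contour corrections})$, and $h_q(0)$ is the same for all $q$ because at $\mu=0$ all $s_q$ are ground states of $H^0$, hence all are stable. I would invoke the implicit-function/degree argument of Pirogov--Sinai (or Zahradn\'ik's version) verbatim here; the contour corrections are Lipschitz and small in $\lambda$, so they do not destroy the homeomorphism property. This defines $J(\beta,\lambda)$ and the neighbourhoods $U$, $A$.

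Finally, for $\mu\in U$ with $a_q=0$, one constructs the pure phase corresponding to $q$ as the thermodynamic limit of the Gibbs distributions $P_{s_q,V}$ along $V\uparrow\Z^{d+1}$; convergence and the fact that the limit is periodic, extreme among periodic limiting Gibbs states, and distinct for distinct stable $q$ (they assign probability close to $1$ to agreeing with $s_q$ in any fixed finite box) all follow from the cluster expansion in the standard way. The main obstacle I anticipate is purely at the contour-definition stage: because the lattice is infinite in the vertical direction, a ``contour'' of a finite-energy excitation can be an unbounded object (e.g. a single column that switches ground state from $-\infty$ to $+\infty$), so one must be careful to define contours as connected components in $\Z^{d+1}$ of the combined horizontal-and-vertical boundary and to verify that each such component has finite size whenever the relative Hamiltonian is finite --- this is where the factor $\lambda$ in the vertical bond energies is used to rule out infinite ``cheap'' walls, and getting the bookkeeping right so that $\Vert\Gamma\Vert$ simultaneously controls the horizontal Peierls constant $\theta$ and the vertical Potts cost $\lambda$ is the crux of the whole argument. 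Once the Peierls bound $H(s,s_q)\ge c\beta(\theta+\lambda)\Vert\Gamma\Vert$ is in hand, the remainder is a routine application of Pirogov--Sinai theory.
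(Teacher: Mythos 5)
Your overall architecture (contours, metastable parameters $a_q$, the non-degeneracy of the matrix $(e_k^q)$ producing the homeomorphism $J$, phases attached to the stable indices $q$ with $a_q=0$) matches the paper's, but the central estimate you rely on is false, and the idea that repairs it is precisely the paper's main new ingredient. The claimed Peierls bound $H(s,s_q)\ge c\,(\theta+\lambda)\,\Vert\Gamma\Vert$ fails for excitations that are \emph{constant in the vertical direction}: take $s=s_q$ outside a box $B\times[0,T]$, $B\subset\Z^d$, and inside let $s$ equal a fixed horizontal configuration $s'$ (independent of $t$) that differs from every ground state on $B$. The vertical Potts energy is paid only on the top and bottom faces of the box, so it is $O(\lambda|B|)$ \emph{uniformly in} $T$, while the horizontal energy is $T\cdot H_g(s',s_q)$ and the support of the excitation contains $\asymp|B|\,T$ sites. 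Hence $H(s,s_q)/\Vert\Gamma\Vert$ remains bounded by a constant independent of $\lambda$ as $T\to\infty$, and no factor $\lambda$ survives in the per-unit energy. Consequently the ``effective inverse temperature $\beta\lambda$'' heuristic breaks down on exactly these vertically homogeneous pieces of a contour: at fixed (possibly small) $\beta$ the energy per unit of contour size is only of order $\beta\theta$ there, which is not large, so the convergence criterion you invoke for the cluster expansion is not met and the ``routine application of Pirogov--Sinai theory'' cannot be carried out at unit scale.

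The paper's resolution is the \emph{vertical aggregation}: $\Z^{d+1}$ is cut into columns $C_{i,k}$ of height $l=l(\beta)$, contours are unions of columns classified as defective, frustrated or faced (Definition \ref{paral1}), and contour size is measured in columns. A defective column (vertically constant but horizontally wrong) pays the horizontal Peierls energy in each of its $l$ layers, giving at least $\beta\rho l$ per column by (\ref{Pp})--(\ref{Ppp}), which is made large by choosing $l$ large as in (\ref{br}); only the frustrated and faced columns are charged to $\lambda$, via (\ref{vertP}) and (\ref{bl}). A second point you omit is that the aggregated ``spin'' of a column ranges over a set of cardinality $|S|^l$ growing with $l$, so the ordinary $\tau$-functional condition of \cite{min41,min51,min61} must be replaced by the $\pi\tau$-condition (\ref{pitau}), in which one sums over all contours with a fixed support; the $uv$-form (\ref{uv}), with the extra term $v\,l\,N_c$ attached to frustrated columns, is exactly what dominates that sum. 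Your last paragraph correctly identifies this bookkeeping as the crux, but the proposal does not supply the renormalization step ($l$ large depending on $\beta$, then $\lambda$ large depending on $l$) that makes it work.
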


\section{Proof of Theorem \ref{thz}}

In \cite{PS}, a result similar to Theorem~\ref{thz} was proved for the case 
of large $\beta$. Here we prove the statement for arbitrary $\beta >0$ but 
$\lambda$ sufficiently large depending on $\beta$.  
To extend the contour method of \cite{PS} to the laminated model with arbitrary
temperature
we need
auxiliary construction that we call \textit{a vertical aggregation of\/}
$\Z^{d+1}$. For any given temperature we can choose the aggregation size $l$
and the value of the vertical interaction $\lambda$ in such a way that
the resulting model is effectively low temperature.

\subsection{Contours}
\label{subseccont}

 Recall that $N$ is the index of the maximal subgroup
$\widehat \Z \subset \Z^{d}$ such that the ground states $s_{1},
\ldots , s_{r}$ of $H_g$ are invariant with respect to it. Choose a real
$\bar R$ such that
\begin{itemize}
\item[1)]  $\bar R>R$, where $R$  is the interaction radius of
$H^{0}$, $H_{1}, \ldots , H_{r-1}$;
 \item[2)] $\bar R> N.$
\end{itemize}
Let an integer $l=l(\beta)$ be given, which we call
\textit{an aggregation size} and which we will choose later.
 We divide $\Z^{d+1}$ into ``columns''  $C_{i,k}= \{ (i,t) : kl \leq t <
(k+1)l \}$. 

\begin{definition}
We say that a  column $C_{i,k}$ is \emph{variable\/}
with respect to configuration $s$ if $s$ is not constant on $C_{i,k}$.
Otherwise $C_{i,k}$ is called \emph{invariable}.
\end{definition}

Define
\begin{equation} \label{ur}
U_{\bar R}(i,k)=\bigcup_{j:|j-i| \leq \bar R} C_{j,k}.
\end{equation}

\begin{definition}\label{paral1}
\begin{itemize}
\item[{\rm a)}] A column  $C_{i,k}$  is called $q$-\emph{regular\/} with respect
to configuration  $s$ if
$$
\mathsf{pr} (s, U_{\bar R}(i,k)) = \mathsf{pr} (s_{q}, U_{\bar
R}(i,k)).
$$
\item[{\rm b)}] A column  $C_{i,k}$ is called \emph{frustrated\/} with respect to
 configuration $s$ if there exists a variable column $
C_{j,k}\subset U_{\bar R}(i,k)$.
 \item[{\rm c)}] A column  $C_{i,k}$ is called
\emph{defective} with respect to configuration $s$ if all columns
$C_{j,k}\subset  U_{\bar R}(i,k)$ are invariable but
$$
\mathsf{pr}(s, U_{\bar R}(i,k)) \neq \mathsf{pr}(s_{q}, U_{\bar
R}(i,k))
$$
for all $q=1, \ldots , r$.
\end{itemize}
\end{definition}
\begin{remark}
If $C_{i,k}$ is $q$-regular and  $C_{i',k}$ is $q'$-regular
 with respect to the same configuration $s$ and if $|i-i'|=1$, then $q=q'$.
However, it can happen that $C_{i,k}$ is  $q$-regular and $C_{i,k+1}$ is
$q'$-regular with $q\neq q'$.
\end{remark}
\begin{definition}\label{paral2}
A pair of $q$-regular and $q'$-regular columns, $C_{i,k}$ and
$C_{i,k+1}$ respectively, is called \emph{a faced pair\/} if $q\neq q'$.
Each of these columns is called a faced column as well.
\end{definition}

\begin{definition}
The boundary $B(s)$ of a configuration $s$ is a union of sites $(i,t)$
such that the column $C_{i,k} \ni (i,t)$ is either
frustrated or defective or faced.
\end{definition}
\begin{remark}
Note that all sites of $C_{i,k}$ belong to $B(s)$ as soon as  one of
them belongs to $B(s)$.  A  site
$(i,t)\in C_{i,k}$ is called frustrated, defective or faced if so is the
column $C_{i,k}$. Thus all the sites of $C_{i,k}$ are of  the
same type.
\end{remark}
\begin{notation}
\label{chislan}
As we have just seen, the boundary $B(s)$ of a configuration $s$ is
a union of columns. Let $N_{d}(s)$ be the number of
defective columns, $N_{c}(s)$ be the number of frustrated
columns and $N_{b}(s)$ be the number of faced columns in $B(s)$.
\end{notation}

Next we introduce a notion of a contour.

\begin{definition}
\label{defcoun}
A \textit{contour} generated by configuration $s$ is a pair
$\Gamma=\Gamma(s)=(M,\pr(s,M))$, where $M$ is a connected
component of the boundary $B(s)$ and $\pr(s,M)$ is the
restriction of configuration $s$ on $M$. The set $M$ is called
the \textit{support} of the contour, $M= \supp \, \Gamma$. Denote by
 $$\|M\|=|M|/l$$ the number of columns in $M$.
\end{definition}

Consider a contour $\Gamma = (M, \pr(s,M))$. The set
$M^c=\Z^{d+1}\setminus M$ parts into a number of maximal connected
components, $A_{\alpha}$. 
Every set $A_{\alpha}$ is a union of columns as well. Define $\partial
A_\alpha=\{(i,t) \in A_\alpha:\: d((i,t),M)=1\}$.

Further we consider only such configurations that each of the contours
is of a finite support.
Let $\Gamma=(M,\pr(s,M))$ be a contour with $|M| < \infty$.
Then, since $d+1 \geq 2$, all maximal connected components
$A_{\alpha}$ of $M^c$ but one are finite. Any finite component
$A_{\alpha}$ is said to belong to the \textit{interior\/} of contour $\Gamma$.
A unique
infinite component  is called \textit{the exterior\/} of contour $\Gamma$
and is denoted by $\Ext \Gamma$.

For any $(i,t)\in \partial A_{\alpha},$ there
exists $q=q(i,t)$ such that the column $C_{i,k}\ni (x,t)$
 is $q$-regular. Moreover $q(i,t)=q(i',t')$ for any
$(i',t')\in A_{\alpha}$. Denote the common value of $q$ of all
the sites of $\partial A_{\alpha}$ by $q(A_{\alpha})$.

For any $m\in \{1,\ldots ,r\}$ (one of the ground states of $H^0$)
 we define the $m$-\textit{interior\/} of
a contour $\Gamma$ as the union of internal parts $A_{\alpha}$ of
$M^c$ with $q(A_{\alpha})=m$:
$$\textsf{Int}_{m}
\Gamma=\bigcup_{\begin{smallmatrix}\alpha: & q(A_{\alpha})=
m\\ & |A_{\alpha}|<\infty\end{smallmatrix}} A_{\alpha}.
$$
We define the interior of a contour $\Gamma$ as the union of its
$m$-interiors: 
$$\textsf{Int}
\Gamma=\bigcup_m\textsf{Int}_{m}\Gamma.
$$  
We denote by $q=q(\Ext \Gamma)$ the common
value $q(i,t)$ for $(i,t)\in \Ext \Gamma$. We say that the contour $\Gamma$
is a  contour
with a boundary condition $s_q$ and denote it by $\Gamma^{q}$ if
$q(\Ext \Gamma)=q$. We say that a contour $\Gamma$ is an \textit{external\/}
contour of configuration $s$ if $\Gamma$ is  not
contained in the interior of any other contour of configuration $s$.

\begin{definition}
\label{defsg}
Given a contour $\Gamma^{q}=(M,\pr(s,M))$ define a configuration
$s_{\Gamma^{q}}$ to be equal to $s$ on $M$, $s_{q}$ on
$\Ext\, \Gamma$ and   $s_{m}$ on $\Int _m\Gamma$.
\end{definition}

We use the following notation for the number of columns in the interior of a contour~$\Gamma$:
\begin{align*}
V_{m}(\Gamma^{q})&= \|\Int_{m} \Gamma^{q}\|, \\
V(\Gamma^{q})&=\|\Int \Gamma^{q}\| .
\end{align*}

Further we sometimes use the notion of a contour just as a pair
$(M,s_M)$ where $M$ is a finite connected subset of $\Z^{d+1}$ and
$s_M$ is a configuration on $M$, without fixing a configuration outside $M$.
Remark that configuration $s_M$ determines the values of $q(i,t)$ for
$(i,t)$ such that $d((i,t),M)=1$. We write 
$\Gamma^q=(M,s_M)$ to indicate that the configuration on
the external boundary of $\Gamma$ coincides with $s_q$.
For a contour $\Gamma^q$ we denote by ${\cal L}(\Gamma^q)$ the set of
configurations $s'$ that are equal to $s_q$ almost everywhere and, moreover, 
$\Gamma^q$ is their unique external contour.

For any finite (not necessarily simply connected)
volume $V\subset \Z^{d+1}$ we denote by
$\mathcal{R}_q(V)$ the set of all configurations $s$ such that
\begin{enumerate}
\item $s=s_q$ out of $V$,
    \item  $B(s)$ and $\Z^{d+1}\setminus V$ are distant,
    \item for any external contour $\Gamma^q$ of $s$ we have
$\Int\Gamma^q\subset   V$.
\end{enumerate}

Let $\Gamma^q=(M,s_M)$ be a contour. We denote by
$\Lambda(\Gamma^{q})$ the set of all contours 
with the same support $M$ and the same external condition $q$.

\subsection{Contour functionals}
\label{subsecfu}

Following \cite{PS} we introduce a notion of a \textit{contour functional}
$F_{q}$ as a real function on contours $\Gamma^{q}$.
A relative Hamiltonian $H(s_{\Gamma^{q}},s_{q})$ (\ref{hamr}) is an example of
a contour functional. We denote it by $H(\Gamma^{q})$. 

%
%
\begin{definition}
A contour functional $F_{q}$ is called a $\pi\tau$-functional if
there exists $\tau > 0$ such that
\begin{equation}\label{pitau}
\sum_{\wt{\Gamma}^q\in \Lambda (\Gamma^{q})} \exp
(-F_{q}(\wt{\Gamma}^{q})) \leq \exp(-\tau \, \|\supp \,\Gamma^{q}\|).
\end{equation}
\end{definition}

The definition analogous to (\ref{pitau})
but without a summation was introduced in \cite{min41,min51,min61}, the functionals being 
called $\tau$-functionals.
We are going to extend the theory of contour models for $\tau$-functionals developed in 
these papers to contour models for $\pi\tau$-functionals.

\begin{definition}
A contour functional $F_{q}$ is called \emph{$uv$-functional} if there exist $u,v>0$ such that 
for any $\Gamma^{q}$
\begin{equation}\label{uv}
F_{q}(\Gamma^{q}) \geq u \| \supp \, \Gamma^{q}\| + v l N_{c}(\Gamma^{q}),
\end{equation}
where $N_{c}(\Gamma^{q})\equiv N_{c}(s_{\Gamma^{q}})$ is the number of
frustrated columns of the configuration $s_{\Gamma^{q}}$ (see Definitions
\ref{defsg}, \ref{paral1} and Notation \ref{chislan}), $l$ is the aggregation size of the laminated
model.
\end{definition}

\begin{proposition}
For any $\tau >0$ there exist positive constants  $u$ and $v$ (depending on $\tau$) such that
if a contour functional $F_{q}$ is $uv$-functional,
then it is $\pi\tau$-functional.
\end{proposition}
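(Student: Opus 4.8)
The plan is to estimate the left-hand side of \reff{pitau} by bounding, for each fixed support $M$ with $\|M\|=n$ columns, the number of contours $\wt\Gamma^q\in\Lambda(\Gamma^q)$ together with the decay of $\exp(-F_q(\wt\Gamma^q))$ for each of them. A contour in $\Lambda(\Gamma^q)$ is a pair $(M,s_M)$ with the same support and the same external ground state $s_q$; the only freedom is the configuration $s_M$ on $M$. The key observation is that $s_M$ is arbitrary only on the \emph{frustrated} columns of $M$: on an invariable column the spin takes one of $|S|$ values, and on a column that is defective or faced the restriction of $s_M$ to the enlarged neighbourhood $U_{\bar R}$ must agree with some $s_m$ away from the variable columns, so the number of choices there is bounded by a constant depending only on $r$, $|S|$, $\bar R$ and $N$. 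Hence the number of contours with support $M$ is at most $c_1^{n}\,(|S|^l)^{N_c(M)}$ for a constant $c_1=c_1(r,|S|,\bar R,N)$, where I write $N_c(M)$ for the number of frustrated columns in $M$ (each such column contributing at most $|S|^l$ configurations, since it has $l$ sites).

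Next I would invoke the $uv$-bound \reff{uv}: for every $\wt\Gamma^q\in\Lambda(\Gamma^q)$,
\begin{equation*}
F_q(\wt\Gamma^q)\ \geq\ u\,\|M\|\ +\ v\,l\,N_c(\wt\Gamma^q)\ =\ u\,n\ +\ v\,l\,N_c(M),
\end{equation*}
where I use that $N_c(\wt\Gamma^q)$ depends only on the support $M$ (being frustrated is a property of which columns of $M$ are variable, and the set of variable columns is part of the data of $M$ together with $s_M$; more precisely one takes the worst case, or notes that $N_c$ is determined by $M$ up to the same kind of bounded ambiguity already absorbed into $c_1$). Combining the counting bound with this lower bound gives
\begin{equation*}
\sum_{\wt\Gamma^q\in\Lambda(\Gamma^q)}\exp(-F_q(\wt\Gamma^q))\ \leq\ c_1^{\,n}\,|S|^{\,l\,N_c(M)}\,\exp\!\big(-u\,n-v\,l\,N_c(M)\big)\ =\ \big(c_1 e^{-u}\big)^{n}\ \big(|S|\,e^{-v}\big)^{\,l\,N_c(M)}.
\end{equation*}
Now I choose $u$ and $v$: take $v$ large enough that $|S|\,e^{-v}\leq 1$, so the last factor is $\leq 1$ and may be dropped; then take $u$ large enough that $c_1 e^{-u}\leq e^{-\tau}$, i.e.\ $u\geq\tau+\log c_1$. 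With these choices the sum is bounded by $e^{-\tau n}=\exp(-\tau\|\supp\Gamma^q\|)$, which is exactly \reff{pitau}. Note the choice of $u,v$ depends only on $\tau$ (and on the fixed structural constants $r,|S|,\bar R,N$ of the model), as required.

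The step I expect to be the main obstacle is the bookkeeping in the combinatorial count: one must argue carefully that the number of admissible configurations $s_M$ on the non-frustrated part of $M$ is bounded by $c_1^{\|M\|}$ with a \emph{constant} base, using that on invariable columns only the common spin value (one of $|S|$) is free and that defective/faced columns are rigidly tied to the ground states $s_1,\dots,s_r$ over $U_{\bar R}$; one also has to make sure that the frustrated columns, which carry the genuine entropy $|S|^l$ per column, are precisely the ones cancelled by the $vlN_c$ term in \reff{uv}. A secondary subtlety is that $N_c(\wt\Gamma^q)$ should be read as $N_c(s_{\wt\Gamma^q})$ and one must check this equals the number of frustrated columns inside the support $M$ itself (the completion $s_{\wt\Gamma^q}$ being $s_q$, resp.\ $s_m$, off $M$ contributes no further frustrated columns), so that the exponent $v l N_c$ indeed matches the entropy factor column by column. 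Once these points are settled, the inequalities above finish the proof.
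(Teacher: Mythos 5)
Your proof is correct and is precisely the ``direct substitution'' that the paper invokes without giving details: the genuine entropy of a variable column (at most $|S|^{l}$ configurations) is beaten by the $v\,l\,N_{c}$ term in \reff{uv} once $v>\log|S|$, the bounded per-column entropy everywhere else is beaten by $u\|M\|$ once $u>\tau+\log c_{1}$, and both constants depend only on $\tau$ and the fixed structural data of the model. The one assertion that needs repair is that $N_{c}(\wt{\Gamma}^{q})$ ``depends only on the support $M$'' --- it does not, since different $s_{M}$ produce different sets of variable columns (and taking a single worst case would decouple the entropy factor from the energy term); but the fix is routine: partition $\Lambda(\Gamma^{q})$ by the set $V$ of variable columns, observe that each class has at most $|S|^{\|M\|}\,|S|^{l|V|}$ elements while every contour in it satisfies $F_{q}\geq u\|M\|+v\,l\,N_{c}(V)$ with $|V|\leq N_{c}(V)$, so each class contributes at most $(|S|e^{-u})^{\|M\|}$ when $v\geq\log|S|$, and then sum over the at most $2^{\|M\|}$ classes, absorbing the extra factor $2$ into the choice of $u$.
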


\begin{proof}
The proof follows from a direct substitution.
\end{proof}

Following  \cite{PS} we use the representation
\begin{equation}\label{expan}
H(\Gamma^{q})= \Psi(\Gamma^{q}) + \sum_{m}
(\tilde h_{m}-\tilde h_{q})V_{m}(\Gamma^{q}),
\end{equation}
where $\tilde h_{q}=l h_{q},\; h_{q}= \sum e_{k}^{q}\mu_{k}.$
The functional $\Psi $ represents the energy of the `boundary' and the
second term represents the `volume' energies of the ground states. For
the laminated model the representation \reff{expan} can be
specified for horizontal and vertical parts of
the model separately. Namely, the functional $\Psi $ is a sum of 
`horizontal' and `vertical' components, that is $\Psi  = \Psi _{g}
+ \Psi _{v}$. Let $t$ be the value of the vertical coordinate and let $H_{g}^{t}$
be the energy along  the horizontal layer $\Z^d_t$. Denote by $\Gamma^{q}_t$
the intersection of contour $\Gamma^{q}$ with the layer $\Z^d_t$.
Then
$$
H_{g}^{t} (\Gamma^{q}_t) = \Psi _{g}^{t} (\Gamma^{q}_t) + \sum_{m}
(h_{m}-h_{q}) V_{m}^{t}(\Gamma^{q}_t),
$$
where $V_m^t$ is the volume of a part of $\Int_m(\Gamma^q)$ that lies in the layer $\Z^d_t$.
Denote by $N_{d}(\Gamma^{q},t)$ the number of defective
columns of $\Gamma^{q}$
intersecting the layer $\Z^d_t$. We assume the constants $\mu_{i}$ in (\ref{perturb})
 to be sufficiently small so that the inequality
\begin{equation}\label{Pp}
\Psi^t_{g} (\Gamma^{q}) \geq \rho N_{d}(\Gamma^{q},t),
\end{equation}
is satisfied for some $\rho > 0$, given the Peierls condition for the
horizontal Hamiltonian $H^{0}$ (\ref{eq5}) is fulfilled. Let us define $U$ as a domain in the space of parameters $\mu$ for which (\ref{Pp}) is satisfied.

Since any column intersects $l$ layers, summing over $t$ we obtain
\begin{equation}\label{Ppp}
\Psi _{g} (\Gamma^{q})\geq \rho l N_{d}(\Gamma^{q}),
\end{equation}
where $N_{d}(\Gamma^{q})\equiv N_{d}(s_{\Gamma^{q}})$ is the number of defective
columns of the configuration $s_{\Gamma^{q}}$ (see Definitions \ref{paral1},
\ref{defsg} and Notation \ref{chislan}).

For the vertical interaction we have
\begin{equation}\label{vertP}
\Psi _{v} (\Gamma^{q})\geq \lambda  N_{b}(\Gamma^{q})+ \lambda
N_{c}(\Gamma^{q}),
\end{equation}
where $N_{b}(\Gamma^{q})$ is the number of faced columns
 and $N_{c}(\Gamma^{q})$ is the number of frustrated columns
of the configuration $s_{\Gamma^{q}}$.
Hence for $ \Psi  = \Psi _{g}+\Psi _{v} $ and for $\beta > 0$ the
estimate
\begin{equation}\label{est1}
\beta\Psi  (\Gamma^{q})\geq \beta\rho l N_{d}(\Gamma^{q}) +
\beta\lambda (N_{b}(\Gamma^{q})+ N_{c}(\Gamma^{q}))
\end{equation}
holds. If
\begin{align}
\label{br} \beta \rho l &\geq u,\\
\label{bl} \beta \lambda &\geq u+v \cdot l,
\end{align}
 then
$\beta \Psi $ is a $uv$-functional. Therefore, if $\beta$ and $\rho$
are fixed then we have to choose $l$ sufficiently large so that \reff{br} is
satisfied and to take $\lambda$ so large that \reff{bl} is
satisfied. We can change slightly $l$ and $\lambda$ so that
$\beta \Psi $ becomes a $(u+1)v$-functional, that is,
\begin{equation}
\label{b4}
\beta\Psi  (\Gamma^{q})\geq (u+1) \|\Gamma^{q}\| +
vlN_{c}(\Gamma^{q}),
\end{equation}
where
\begin{equation}
\label{b5}
\|\Gamma^{q}\|=:N_{b}(\Gamma^{q}) + N_{c}(\Gamma^{q}) +
N_{d}(\Gamma^{q})= \| \supp \Gamma^{q} \|.
\end{equation}

\subsection{Partition functions}
\label{subsecpf}


 Our aim now is to compare partition
functions of the laminated model with partition functions of contour models with specially
adjusted contour functionals $F_{q}$.

Let $V$ be a finite volume.  Following \cite{PS} we introduce
the notion of a $(V,q)$-partition function, which is the partition function
of the laminated model in the volume $V$ with boundary conditions $s^{q}$:
\begin{equation}
\label{eqtmf21} \Xi^{q} (V|\beta H)= \sum_{s \in {\cal
R}_{q}(V)} \exp(-\beta H(s,s_{q})).
\end{equation}

 For a given contour $\Gamma^{q}$ let us define the contour partition function as follows:
\begin{equation}
\label{eqtmf20} \Xi (\Gamma^{q}|\beta H)= \sum_{s \in
{\cal L} (\Gamma^{q})} \exp(-\beta H(s,s_{q})).
\end{equation}

 A relation between these
partition functions is established in the next
\begin{lemma}
\label{ltmf31} For any finite volume $V \subset \Z^{d+1}$
\begin{equation}
\label{eqtmf22} \Xi^{q} (V|\beta H)= \sum
\prod_{i} \Xi (\Gamma_{i}^{q}| \beta H),
\end{equation}
where the sum is taken over all collections $\{\Gamma_{1}^{q}, \ldots ,
\Gamma_{n}^{q}\}$ of external contours (including an empty collection)
with pairwise distant
supports such that the following conditions are fulfilled:
\begin{align}
&\supp \, \Gamma_{i}^{q} \subset V, \notag \\
&d(\supp \, \Gamma_{i}^{q}, \Z^{d}\setminus V)>1, \label{bV}\\
&\Int \, \Gamma_{i}^{q} \subset V.\notag
\end{align}
The product runs over all contours in the collection.
For the empty collection of contours the weight $1$ is assigned.

For any contour $\Gamma^{q}$,
\begin{equation}
\label{eqtmf23} \Xi (\Gamma^{q}|\beta H)= \exp(- \beta
H(\Gamma^{q})) \prod_{m}\Xi^{m}(\Int_{m}\Gamma^{q}| \beta
H).
\end{equation}
\end{lemma}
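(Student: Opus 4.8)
The plan is to prove the two identities (\ref{eqtmf22}) and (\ref{eqtmf23}) by direct combinatorial decomposition of the configuration space, exactly as in the classical Pirogov--Sinai setting but keeping track of the column structure. First I would establish (\ref{eqtmf22}). The key observation is that every configuration $s\in\mathcal R_q(V)$ has a well-defined, possibly empty, finite collection of external contours $\{\Gamma_1^q,\ldots,\Gamma_n^q\}$; by conditions 1--3 defining $\mathcal R_q(V)$ these supports lie in $V$, are pairwise distant (two distinct external contours cannot touch, since $B(s)$ would then be connected through them), satisfy $d(\supp\Gamma_i^q,\Z^{d+1}\setminus V)>1$, and have $\Int\Gamma_i^q\subset V$. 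Conversely, given any such admissible collection, the configurations $s$ having exactly this collection as their set of external contours are obtained by independently choosing, for each $i$, a configuration in $\mathcal L(\Gamma_i^q)$ restricted to $\supp\Gamma_i^q\cup\Int\Gamma_i^q$, and setting $s=s_q$ elsewhere. Because the supports are pairwise distant and the interiors are disjoint (an external contour's interior cannot meet another external contour), the relative Hamiltonian splits additively: $H(s,s_q)=\sum_i H(s_{\Gamma_i^q},s_q)$ plus, inside each interior, the energy coming from the interior configuration — but this is precisely what $\Xi(\Gamma_i^q|\beta H)$ already accounts for. Hence the exponential factorizes and summing over all admissible collections yields (\ref{eqtmf22}). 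The only subtlety is the additivity of $H$: since $H^0,H_1,\ldots,H_{r-1}$ have finite range $R<\bar R$ and $\Phi^v$ is nearest-neighbour, no term of the Hamiltonian couples two distant sets, so all cross terms vanish once the contour supports are pairwise distant — this is the point to state carefully.

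Next I would prove (\ref{eqtmf23}). Fix a contour $\Gamma^q=(M,\pr(s,M))$ and consider $s'\in\mathcal L(\Gamma^q)$, i.e.\ $s'$ equals $s_q$ almost everywhere and $\Gamma^q$ is its unique external contour. On $M$ the configuration $s'$ is forced to agree with $\pr(s,M)$; on $\Ext\Gamma$ it equals $s_q$. Inside each $m$-interior component $A_\alpha$ the induced boundary condition on $\partial A_\alpha$ is $s_m$ (by the definition of $q(A_\alpha)$ and $\Int_m\Gamma$), and since $\Gamma^q$ is the unique external contour of $s'$, the restriction of $s'$ to $A_\alpha$ (extended by $s_m$ outside) must lie in $\mathcal R_m(A_\alpha)$ — all of its external contours sit inside $A_\alpha$ with interiors inside $A_\alpha$, and its boundary stays distant from $\Z^{d+1}\setminus A_\alpha$ because it is separated from $M$. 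Grouping the $A_\alpha$ by the value $m=q(A_\alpha)$ and using once more the distance--range argument to split $H(s',s_q)=H(\Gamma^q)+\sum_m\big[\text{energy in }\Int_m\Gamma\text{ relative to }s_m\big]$, the sum over $s'\in\mathcal L(\Gamma^q)$ factorizes as $\exp(-\beta H(\Gamma^q))\prod_m\Xi^m(\Int_m\Gamma^q|\beta H)$, which is (\ref{eqtmf23}).

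The main obstacle I expect is bookkeeping rather than anything deep: one must verify that the map ``configuration $\mapsto$ (external contours, configurations inside their interiors)'' is a genuine bijection onto admissible data, i.e.\ that no configuration is counted twice and that every admissible choice produces a configuration whose external-contour family is exactly the prescribed one (in particular that no \emph{new} contour appears at the interface between $M$ and $\Int_m\Gamma$, and that $q(\Ext)$ of the reconstructed configuration is still $q$). This uses the remark that a column lies entirely in $B(s)$ or entirely outside it, together with the fact that $\bar R>N$ and $\bar R>R$ guarantee that $q$-regularity of a column is detected within the finite window $U_{\bar R}(i,k)$ and is stable under the gluing. Once the bijection and the additive splitting of $H$ are in place, both (\ref{eqtmf22}) and (\ref{eqtmf23}) follow by rearranging the sum of products, with the empty collection contributing the factor $1$.
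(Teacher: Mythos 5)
The paper offers no proof of this lemma at all: it is stated bare and followed only by a remark on its recursive use, the decomposition being treated as a routine import from \cite{PS}, so your proposal in fact supplies the argument the authors omit, and it is the correct, standard one. The two points you single out as delicate — the localization of the relative Hamiltonian on pairwise distant supports, and the fact that gluing interior configurations from $\mathcal{R}_m(\Int_m\Gamma^q)$ creates no new boundary columns adjacent to $\supp\Gamma^q$ — are precisely where the definitions of frustrated and defective columns via the windows $U_{\bar R}$ (with $\bar R>R$, so that every variable or mismatched column is padded by a collar of boundary columns of horizontal width $\bar R$) must be invoked, exactly as you indicate.
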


\begin{remark}
Assume that there is a collection of numbers $\Xi (\Gamma^{q})$ indexed by contours 
$\Gamma^{q}$ and a collection of numbers  $\Xi^{q} (V)$ defined for every
finite volume  $V \subset \Z^{d+1}$ which are related by
(\ref{eqtmf22}), (\ref{eqtmf23}). Then
Then $\Xi (\Gamma^{q})= \Xi(\Gamma^{q}| \beta H)$.
Thus
equations (\ref{eqtmf22}) and (\ref{eqtmf23}) compose a chain
of recurrent relations for $\Xi (\Gamma^{q}| \beta H)$.
\end{remark}

\subsection{Contour models}
\label{subseccm}

Following \cite{min41,min51,min61}  we are going to construct abstract contour models which are 
defined as probability distributions on collections of contours. By a contour we mean,
 as above, a subset of $\Z^{d+1}$ with a configuration on it. However, the contours 
constituting a collection do not necessarily agree with each other in the sense that 
there can be no configuration on $\Z^{d+1}$ that generates this collection. 
Given a finite volume $V\subset\Z^{d+1}$,  denote by
$\mathcal{P}(V)$
the ensemble
consisting of finite
collections of (not necessarily external) contours
$\{\Gamma_1^q,\ldots ,\Gamma_n^q\},$  with
mutually distant supports that satisfy condition (\ref{bV}). The empty collection also belongs to
$\mathcal{P}(V)$. Suppose we have a contour functional $F_q$.  The contour model with the functional
 $F_q$ is defined as the following probability
distribution on the ensemble $\mathcal{P}(V)$
\begin{equation}\label{1.30}
P_V(\{\Gamma_1^q,\ldots ,\Gamma_n^q\}|F_q)=\frac{\exp\left(
-\sum_iF_q(\Gamma_i^q)\right)}
{\Xi (V|F_q)}.
\end{equation}
Here $\Xi (V|F_q)$ is a normalizing factor (called {\it the partition
function of the contour model}).

\begin{definition}
A collection of probability distributions $ P_{V}$ defined by (\ref{1.30})
for all finite volumes $V$ is called a \textit{contour model}.
\end{definition}

Given a contour $\Gamma^{q}$, consider a set $\mathcal{G}(\Gamma^{q})$
of collections of contours $g=\{ \Gamma^{q}, \Gamma_1^q,\ldots ,\Gamma_n^q \} \in \mathcal{P}(V)$
such that the only external contour of each collection is $\Gamma^{q}$. We define the energy
of a collection $g \in \mathcal{G}(\Gamma^{q})$ as
\begin{equation}\label{energy}
E(g)= \sum_i F_q(\Gamma_i^q) + F_q(\Gamma^q).
\end{equation}

\begin{definition}
For a contour $\Gamma^{q}$ we define a \emph{virtual partition function\/} by the formula
\begin{equation}
\label{fpart}
\Xi (\Gamma^{q}|F_q) = \sum_{g \in \mathcal{G}(\Gamma^{q})}
\exp (-E(g)).
\end{equation}
\end{definition}
The relation between $\Xi (V|F_q)$ and $\Xi (\Gamma^{q}|F_q)$ is like that one 
between the $(V,q)$-partition function and contour partition function of the laminated 
model (\ref{eqtmf22}). In its turn, 
$\Xi (\Gamma^{q}|F_q)$ can be expressed through
$\Xi (V|F_q)$ as
\begin{equation}
\label{eqtmf23b}
\Xi (\Gamma^{q}| F_q)= \exp(- F_q(\Gamma^{q})) \prod_{m}\Xi (\Int_{m}\Gamma^{q}|
F_q).
\end{equation}
%

 In \cite{min41,min51,min61}, the expansion of the contour model partition function 
(for $\tau$-functionals) onto a `volume' and `boundary' terms was given. There was also given 
an estimate of the boundary term under the condition that $\tau$ is sufficiently large. 
We generalize this estimate to the case of $\pi\tau$-functionals.
Let $F_{q}$ be $\pi\tau$-functional such that it is invariant
with respect to some subgroup $\wt\Z \subset \Z^{d+1} $ of finite index.
We can decompose the logarithm of the contour model partition function as
\begin{align}
\label{eqtmf29} \log \Xi (V|F_{q})&=
s(F_{q})\|V\|+\Delta(V|F_{q}),\\
|\Delta(V|F_{q})| &< \varepsilon(\wt\Z ;\tau)\|\partial V\|,
\label{eqtmf29a}
\end{align}
 where $\varepsilon(\wt\Z ;\tau)>0$ is such that
$$
\varepsilon(\wt\Z ;\tau)\to 0 \mbox{ as } \tau\to\infty.
$$
From the definition of the contour model it follows that the probability
$P_{V}(\Gamma^{q}|F_q)$ for the contour $\Gamma^{q}$ to belong to a
collection $g \in {\mathcal P} (V)$ is equal to
$$\Big| \frac{\partial}{\partial F_{q}(\Gamma^{q})} \, \log
\Xi (V|F_{q})\Big|. $$
On the other hand,
$$P_{V}(\Gamma^{q}|F_q)\leq  \exp(-F_q(\Gamma^{q})).$$
Thus
\begin{equation}
\label{cineq}
\Big| \frac{\partial}{\partial F_{q}(\Gamma^{q})} \, \log
\Xi (V|F_{q})\Big| \leq \exp(-F_q(\Gamma^{q})).
\end{equation}

\subsection{Parameter contour models}
\label{subsecpcm}

Now we recall the notion of the parameter contour models \cite{PS}.
Suppose we have a contour functional $F_q$ and a
non-negative number $a_q$. The contour model with the functional
 $F_q$ and the parameter $a_q$ is defined as the following probability
distribution on the ensemble $\mathcal{P}(V)$
\begin{equation}\label{1.30a}
P_V(\{\Gamma_1^q,\ldots ,\Gamma_n^q\}|F_q,a_q)=\frac{\exp\left(
-\sum_iF_q(\Gamma_i^q)+a_q\|\bigcup_i\Int\Gamma_i^q\|\right)}
{\Xi (V|F_q,a_q)}.
\end{equation}
Here $\Xi (V|F_q,a_q)$ is a normalizing factor (the partition
function of the parameter contour model). This partition function
can be expressed in terms of $\Xi(\Gamma_{i}^{q}|F_q)$ as follows
\begin{equation}\label{1.32}
\Xi (V|F_q, a_{q})= \sum \prod\exp\left(a_qV(\Gamma_{i}^{q})\right)
\Xi (\Gamma_{i}^{q}|F_q).
\end{equation}
For the $\wt\Z $-invariant functional $F_q$ we define the boundary term $\Delta(V|F_q,a_q)$ by
\begin{equation}\label{1.33}
\log\Xi (V|F_q,a_q)=(s(F_q)+a_q)\|V\|+\Delta(V|F_q,a_q).
\end{equation}

We define the norm $|  F_{q}  | _{c}$ by
\begin{equation}
\label{eqtmf31} | F_{q} | _{c} = \sup_{\Gamma^{q}}
\frac{|F_{q}(\Gamma^{q})|}{(\| \Gamma^{q}\|+V(\Gamma^{q}))
c^{\delta(\Gamma^{q})}}
\end{equation}
where $c$ is a constant greater than $1$ and
$\delta(\Gamma^{q})$  the diameter of $\supp \,\Gamma^{q}$ in the sense of metric
(\ref{metric}).

From (\ref{cineq}) and condition \reff{pitau}
it easily follows that there exists $\gamma(\tau,c)$ such that 
$\gamma(\tau,c)\to 0$ as  $\tau\to\infty$
and for $\tau$ sufficiently large it holds
\begin{equation}\label{1.26}
|s(F_{q}) - s(F'_{q})|<
\gamma(\tau;c) | F_{q} - F'_{q}|_{c}
\end{equation}
for any pair of $\wt\Z $-invariant functionals $F_{q}$ and
$F_{q}'$.

From an obvious inequality
\begin{equation}
\label{eqtmf40} 1 \leq \Xi (V|F_{q},a) \leq \exp (a\|V\|)
\, \Xi (V|F_{q})
\end{equation}
it follows that
\begin{equation}
\label{eqtmf41} -(a+s(F_q))\|V\| \leq \Delta (V|F_{q},a) \leq
\varepsilon(\wt\Z ;\tau)\|\partial V\|.
\end{equation}

We need to estimate the difference of the boundary terms of
two different $uv$-functionals.
\begin{lemma}
\label{ltmf34} For any pair of $uv$-functionals, $F_{q}$ and
$F'_{q}$, for $a, a'\geq 0$, the following inequality holds:
\begin{equation}
\label{eqtmf42} |\Delta (V|F_{q},a) - \Delta (V|F'_{q},a')|
\leq
\Bigl( \frac{1}{c-1}+ \gamma(\tau;c)\Bigr)  c^{\delta(V)}\,
\|V\| \, | F_{q} - F'_{q}| _{c}
 + \|V\| \, |a-a'|.
\end{equation}
\end{lemma}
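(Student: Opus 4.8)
The plan is to reduce the estimate for $|\Delta(V|F_q,a)-\Delta(V|F_q',a')|$ to two already-available facts: the Lipschitz-type bound \reff{1.26} for the volume terms $s(F_q)$, and the contour-probability bound \reff{cineq}. First I would use the triangle inequality to split the difference into three pieces: the dependence on $a$ alone (with $F_q$ fixed), the dependence on $F_q$ alone (with $a$ fixed, say $a'$), and a possible intermediate step. The $a$-dependence is the easy term: from the defining relation \reff{1.33} together with \reff{eqtmf29}, one sees that $\Delta(V|F_q,a)-\Delta(V|F_q,a')=\log\Xi(V|F_q,a)-\log\Xi(V|F_q,a')-(a-a')\|V\|$, and since the parameter $a$ enters \reff{1.30a} only through the factor $\exp(a\|\bigcup_i\Int\Gamma_i^q\|)$ with $\|\bigcup_i\Int\Gamma_i^q\|\le\|V\|$, a direct monotonicity/convexity argument in $a$ gives $|\Delta(V|F_q,a)-\Delta(V|F_q,a')|\le\|V\|\,|a-a'|$, which is exactly the last term on the right-hand side of \reff{eqtmf42}.

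The main work is the term with $a'=a$ fixed and $F_q$ varying. Here I would write $\Delta(V|F_q,a)-\Delta(V|F_q',a)=\bigl(\log\Xi(V|F_q,a)-\log\Xi(V|F_q',a)\bigr)-(s(F_q)-s(F_q'))\|V\|$, using \reff{1.33}. The second summand is controlled by $\gamma(\tau;c)\,c^{\delta(V)}\|V\|\,|F_q-F_q'|_c$ via \reff{1.26} (the factor $c^{\delta(V)}$ is harmless since $\delta(V)\ge\delta(\Gamma^q)$ for every contour occurring in $V$). For the first summand I would interpolate along the straight segment $F_q^{(\xi)}=(1-\xi)F_q'+\xi F_q$, $\xi\in[0,1]$, so that $\log\Xi(V|F_q,a)-\log\Xi(V|F_q',a)=\int_0^1\frac{d}{d\xi}\log\Xi(V|F_q^{(\xi)},a)\,d\xi$. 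Differentiating, the integrand is $\sum_{\Gamma^q}\bigl(F_q'(\Gamma^q)-F_q(\Gamma^q)\bigr)\,\frac{\partial}{\partial F_q(\Gamma^q)}\log\Xi(V|F_q^{(\xi)},a)$, and by the contour-probability interpretation each derivative is bounded in absolute value by $\exp(-F_q^{(\xi)}(\Gamma^q)+a V(\Gamma^q))\le\exp(-F_q^{(\xi)}(\Gamma^q))$ (here one uses that $a$ enters at most the interior, cf.\ \reff{cineq} adapted to the parameter model). Bounding $|F_q'(\Gamma^q)-F_q(\Gamma^q)|\le(\|\Gamma^q\|+V(\Gamma^q))c^{\delta(\Gamma^q)}|F_q-F_q'|_c$ by definition of the norm, the sum over contours $\Gamma^q$ with $\supp\Gamma^q\subset V$ is dominated, using the $\pi\tau$/$uv$ estimate \reff{pitau}–\reff{uv}, by a geometric series in $1/c$; summing it produces the factor $\tfrac{1}{c-1}c^{\delta(V)}\|V\|$. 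Collecting the two contributions yields $\bigl(\tfrac{1}{c-1}+\gamma(\tau;c)\bigr)c^{\delta(V)}\|V\|\,|F_q-F_q'|_c$, and adding the $a$-term from the first paragraph gives \reff{eqtmf42}.

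The step I expect to be the main obstacle is the convergence of the contour sum $\sum_{\Gamma^q\subset V}(\|\Gamma^q\|+V(\Gamma^q))c^{\delta(\Gamma^q)}\exp(-F_q^{(\xi)}(\Gamma^q))$ and its bound by $\tfrac{1}{c-1}c^{\delta(V)}\|V\|$: one must check that the convex combination $F_q^{(\xi)}$ is still a $uv$-functional (hence a $\pi\tau$-functional for $\tau$ large, by the Proposition preceding the partition-function section) so that \reff{pitau} applies uniformly in $\xi$, and then carry out the standard cluster-type summation — grouping contours first by support and by the column containing a fixed reference site, then summing over connected supports weighted by $c^{\delta}$ against $e^{-\tau\|\cdot\|}$ — so that the polynomial prefactors $\|\Gamma^q\|+V(\Gamma^q)$ and the factor $c^{\delta(\Gamma^q)}$ are absorbed with room to spare, leaving the advertised $\tfrac{1}{c-1}$. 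This is the place where the choice $c>1$ and $\tau$ large is actually used, and where the geometry of the vertical-aggregation columns (each contour being a union of columns of height $l$) must be invoked to keep the combinatorial entropy of supports under control.
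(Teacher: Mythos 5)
Your overall skeleton --- splitting off the $a$-dependence, using \reff{1.26} for the $s(F_q)$ term, and applying the Lagrange formula together with the interpretation of $\partial\log\Xi/\partial F_q(\Gamma^q)$ as the probability $P_V(\Gamma^q\mid \bar F_q,a)$ --- is exactly the paper's route up to \reff{eqtmf47}. But the closing step is where you diverge, and your version does not work. You propose to bound $P_V(\Gamma^q\mid\bar F_q,a)$ by $\exp(-\bar F_q(\Gamma^q))$ and then to control $\sum_{\Gamma^q\subset V}(\|\Gamma^q\|+V(\Gamma^q))\,c^{\delta(\Gamma^q)}\exp(-\bar F_q(\Gamma^q))$ by an entropy/cluster summation using the $\pi\tau$ and $uv$ bounds. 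This cannot be carried out with the constants available in this paper. The diameter $\delta(\Gamma^q)$ is measured in sites of $\Z^{d+1}$, and every column has height $l$, so already a single-column contour has $c^{\delta(\Gamma^q)}=c^{\,l-1}$, and a vertical stack of $n$ defective columns has $c^{\delta(\Gamma^q)}\approx c^{\,nl}$ against an energy lower bound of only $u\,n$ with $u\le\beta\rho l$ from \reff{br}. Hence beating $c^{\delta(\Gamma^q)}$ by $e^{-F_q(\Gamma^q)}$ would force $u\gtrsim l\log c$, i.e.\ $\beta\rho\gtrsim\log c=\log 13$ --- which is precisely what one cannot assume, since the whole point of the theorem is that $\beta$ is arbitrary and only $l$ and $\lambda$ are large. (As a secondary point, your inequality $\exp(-F+aV(\Gamma^q))\le\exp(-F)$ is reversed for $a\ge 0$.) You correctly flagged this summation as the main obstacle; it is in fact an obstacle your approach cannot overcome.

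The paper avoids the issue entirely by never invoking any decay of $P_V$. The weighted sum in \reff{eqtmf47} is rewritten as the expectation, over the random collection $\{\Gamma_1^q,\dots,\Gamma_n^q\}\in\mathcal{P}(V)$, of the random variable $\sum_{i}(\|\Gamma_i^q\|+V(\Gamma_i^q))\,c^{\delta(\Gamma_i^q)}$, and this random variable is bounded \emph{deterministically} by $\frac{c^{\delta(V)}}{c-1}\|V\|$ (Lemma \ref{ltmf35}). That bound is purely combinatorial: for any admissible collection the external contours have mutually distant supports and disjoint interiors, so $\sum_t(\|\Gamma_{i_t}\|+V(\Gamma_{i_t}))\le\|V\|$ and $\delta(\Gamma_{i_t})\le\delta(V)-1$; iterating into the interiors gives the recursion $\psi(d)\le c^{d-1}+\psi(d-1)$ and hence the geometric sum $\frac{c^{d}-1}{c-1}$. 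No largeness of $u$, $v$ or $\tau$ is needed, and the exact constant $\frac{1}{c-1}$ --- which is later essential for the contraction estimate \reff{eqtmf60p} with $c=13$ --- comes out for free. To repair your proof you should replace the cluster summation by this nesting/induction-on-diameter argument applied pathwise inside the expectation.
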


\Proof It is sufficient to consider just two cases: either $a=a'$ or
$F_{q}=F'_{q}$. Both proofs are similar therefore we consider the
case $a=a'$.

Since
\begin{align}
\label{eqtmf43} &|\Delta (V|F_{q},a) - \Delta (V|F'_{q},a')|
\\
&\quad \leq |\log \Xi (V|F_{q},a) - \log
\Xi (V|F'_{q},a)| \nonumber \\
&\qquad + |s(F_{q}) - s(F'_{q})| \cdot \|V\|, \nonumber \\
\intertext{and} &|s(F_{q}) - s(F'_{q})|<
\gamma(\tau;c) | F_{q} - F'_{q}| _{c}, \nonumber
\end{align}
it is enough to prove the estimate
\begin{equation}
\label{eqtmf44} |\log \Xi (V|F_{q},a) - \log
\Xi (V|F'_{q},a)| \leq \frac{c^{\delta(V)}}{c-1}\, \|V\| \, |
F_{q} - F'_{q}| _{c}.
\end{equation}
By the definition,  the probability
$P_{V}(\Gamma^{q}|F_{q},a)$ for a contour $\Gamma^{q}$ to belong to
a collection of contours $ \{ \Gamma_{1}^{q}, \ldots , \Gamma_{n}^{q}\}
\in {\cal P}(V)$ is
\begin{equation}
\label{eqtmf45}P_{V}(\Gamma^{q}|F_{q},a)= \left|
\frac{\partial}{\partial F_{q}(\Gamma^{q})} \, \log
\Xi (V|F_{q},a) \right| \, .
\end{equation}
By the Lagrange formula,
\begin{align}
\label{eqtmf46} &|\log \Xi (V|F_{q},a) - \log
\Xi (V|F'_{q},a)| \\
&\quad \leq \sum_{\Gamma^{q} \subset V} P_{V}(\Gamma^{q}|\bar
F_{q},a) \, |F_{q}(\Gamma^{q})- F'_{q}(\Gamma^{q})|, \nonumber
\end{align}
where $\bar F_{q}= \theta F_{q}+(1-\theta)F_{q}',$ $0\leq
\theta \leq 1.$ Thus
\begin{align}
\label{eqtmf47} &|\log \Xi (V|F_{q},a) - \log
\Xi (V|F'_{q},a)| \\
& \quad \leq \sum_{\Gamma^{q} : (V)} P_{V}(\Gamma^{q}|\bar
F_{q},a) c^{\delta(\Gamma^{q})}(\|\Gamma^{q}\| + V(\Gamma^{q}))
\cdot |
F_{q} - F'_{q}| _{c} \nonumber \\
& \quad = \E_{V}\left(    \sum_{i=1}^{n} (\|\Gamma_{i}^{q}\| +
V(\Gamma_{i}^{q})\, c^{\delta(\Gamma_{i})}\, | \, \bar F_{q},a \right)
\cdot | F_{q} - F'_{q}| _{c}, \nonumber
\end{align}
where $\E_{V}(\cdot | \bar F_{q},a)$ is the expectation with respect
to $ P_{V}$ and the summation $\sum_{\Gamma^{q} : (V)}$ runs over all
contours $\Gamma^{q}$ satisfying conditions (\ref{bV}). Now \reff{eqtmf42}
follows from the next
\begin{lemma}
\label{ltmf35} For any finite volume $V \subset \Z^{d+1}$ and any
collection of contours  $ \{ \Gamma_{1}^{q}, \ldots , \Gamma_{n}^{q}\} \in
{\cal P}(V)$ satisfying condition (\ref{bV}),
\begin{equation}
\label{eqtmf48} \sum_{i=1}^{n} (\|\Gamma_{i}^{q}\| +
V(\Gamma_{i}^{q}))\, c^{\delta(\Gamma_{i}^{q})} < \frac{c^{\delta
(V)}}{c-1} \cdot \|V\|.
\end{equation}
\end{lemma}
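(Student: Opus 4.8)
The plan is to reduce the estimate to a purely geometric/combinatorial statement about nested families of connected subsets of $\Z^{d+1}$ and then prove it by induction on the structure of the nesting. The key observation is that for a collection $\{\Gamma_1^q,\ldots,\Gamma_n^q\}\in\mathcal P(V)$ with mutually distant supports satisfying (\ref{bV}), each term $\|\Gamma_i^q\|+V(\Gamma_i^q)$ counts (in units of columns) the support of $\Gamma_i^q$ together with its interior. Since the supports are pairwise distant, these "filled contours" $M_i\cup\Int\Gamma_i^q$ are arranged in a forest-like hierarchy: two of them are either disjoint or one contains the other. Summing $\|\Gamma_i^q\|+V(\Gamma_i^q)$ over the whole collection therefore counts each column of $V$ with multiplicity equal to the number of filled contours containing it, i.e.\ the depth of the nesting at that column. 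So the left side of (\ref{eqtmf48}), stripped of the weights $c^{\delta(\Gamma_i^q)}$, is at most $\sum_{(i,k)\in V/l} (\text{nesting depth at column } (i,k))$, and the weights $c^{\delta(\Gamma_i^q)}$ are what we must control against the single global factor $c^{\delta(V)}$.

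First I would make precise that the filled contours form a forest: if $\supp\Gamma_i^q\subset\Int\Gamma_j^q$ (possible since supports are distant and interiors are unions of columns), write $\Gamma_i^q\prec\Gamma_j^q$; otherwise the filled sets are disjoint. This uses only that $M^c$ splits into connected components one of which is infinite (the Ext), established in Section \ref{subseccont}. Next, for a fixed contour $\Gamma_j^q$ at the "top" of some tree, I would bound the contribution of $\Gamma_j^q$ and of everything nested strictly inside it. The crucial geometric input is that $\delta(\Gamma_i^q)\le\delta(V)$ for every $i$ (since $\supp\Gamma_i^q\subset V$), so $c^{\delta(\Gamma_i^q)}\le c^{\delta(V)}$ uniformly, and that along any chain $\Gamma_{i_1}^q\prec\Gamma_{i_2}^q\prec\cdots$ the supports are disjoint and each contributes at least one column, while the innermost one's interior is contained in the next one's interior. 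This is what produces the geometric series $\sum_{k\ge 0}c^{-k}=c/(c-1)$ — wait, more carefully: one wants to charge, to each column of $V$, a geometrically decaying amount as one descends the nesting, so that the total charge per column is at most $c^{\delta(V)}/(c-1)$ and summing over the $\|V\|$ columns gives the bound.

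The cleanest route is induction on $\|V\|$ (or on the number of contours). Order the collection so that $\Gamma_1^q,\ldots,\Gamma_m^q$ are the maximal (outermost) ones; their filled sets are pairwise disjoint and contained in $V$, so $\sum_{j\le m}(\|\Gamma_j^q\|+V(\Gamma_j^q))\le\|V\|$, each such term carrying weight $\le c^{\delta(V)}$. For each outermost $\Gamma_j^q$, the contours nested inside it form a sub-collection inside the volume $\Int\Gamma_j^q$, whose total column-count $\|\Int\Gamma_j^q\|=V(\Gamma_j^q)$ is strictly smaller than $\|V\|$ unless the collection is trivial; applying the induction hypothesis with $V$ replaced by $\Int\Gamma_j^q$ and using $\delta(\Int\Gamma_j^q)\le\delta(V)$ gives a bound $\frac{c^{\delta(V)}}{c-1}V(\Gamma_j^q)$ for the inner sum. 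Adding up: the total is at most $c^{\delta(V)}\sum_j(\|\Gamma_j^q\|+V(\Gamma_j^q))+\frac{c^{\delta(V)}}{c-1}\sum_j V(\Gamma_j^q)\le c^{\delta(V)}\|V\|+\frac{c^{\delta(V)}}{c-1}\|V\| = \frac{c^{\delta(V)}}{c-1}\|V\|\cdot\frac{c}{1}$, so one must be slightly more economical — the right accounting is to peel off only the factor $c^{\delta(V)}/c$ at each level so the geometric series closes at $c^{\delta(V)}/(c-1)$ exactly. I expect the main obstacle to be exactly this bookkeeping: getting the constant to come out as $1/(c-1)$ rather than $c/(c-1)$ requires choosing the induction so that an outermost contour's own weight $c^{\delta(\Gamma_j^q)}$ is charged against its interior columns at the next level down, exploiting $\delta(\Gamma_j^q)\le\delta(\Int\Gamma_j^q \cup \supp\Gamma_j^q)$; the geometry (distant supports, interiors being unions of whole columns) is what makes this slack available, and verifying it carefully is the real content of the lemma.
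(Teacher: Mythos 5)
Your setup is right --- peeling off the outermost contours, using that their filled supports are pairwise disjoint so that $\sum_{j}(\|\Gamma_{j}^{q}\|+V(\Gamma_{j}^{q}))\le\|V\|$, and recursing on the interiors --- and this is exactly the decomposition the paper uses in \reff{eqtmf50}--\reff{psi1}. But your proof does not close: as you yourself compute, charging each outermost contour the full weight $c^{\delta(V)}$ and then invoking the inductive bound $\frac{c^{\delta(V)}}{c-1}V(\Gamma_{j}^{q})$ for its interior yields $\bigl(1+\frac{1}{c-1}\bigr)c^{\delta(V)}\|V\|=\frac{c}{c-1}\,c^{\delta(V)}\|V\|$, which overshoots \reff{eqtmf48} by a factor of $c$, and the ``more economical accounting'' you appeal to is left as a hope rather than carried out. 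The inequality you propose to exploit, $\delta(\Gamma_{j}^{q})\le\delta(\Int\Gamma_{j}^{q}\cup\supp\Gamma_{j}^{q})$, is not the relevant one and does not by itself produce any decay.

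The missing idea is to run the induction on the \emph{diameter} $\delta(V)$ rather than on $\|V\|$ or on the number of contours, and to use the \emph{strict} decrease of diameter at each level of nesting. Condition \reff{bV} forces $d(\supp\Gamma_{i}^{q},\Z^{d+1}\setminus V)>1$, hence every contour of the collection, and also the interior of every outermost contour, has diameter at most $\delta(V)-1$ (not merely $\le\delta(V)$, which is all you use). Setting $\psi(d)=\max_{V:\,\delta(V)\le d}\varphi(V)$ with $\varphi(V)$ the normalized maximal value of the left side of \reff{eqtmf48}, the decomposition then gives $\psi(d)\le c^{d-1}+\psi(d-1)$: the outermost contours contribute at most $\|V\|\,c^{\delta(V)-1}$ because their weights are $c^{\delta(\Gamma_{i_t}^{q})}\le c^{\delta(V)-1}$, and the interiors are handled by $\psi(\delta(V)-1)$. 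Iterating yields the geometric series $c^{d-1}+c^{d-2}+\dots+1=\frac{c^{d}-1}{c-1}<\frac{c^{d}}{c-1}$, which is precisely where the constant $\frac{1}{c-1}$ comes from. Without the strict drop in diameter per nesting level there is no geometric decay in the weights and the bound cannot be obtained, so this step is the actual content of the proof, not bookkeeping that can be deferred.
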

\Proof For any finite volume $V \subset \Z^{d+1}$ let
\begin{equation}
\label{eqtmf49} \varphi(V)=\frac{1}{\|V\|} \max \sum_{i=1}^{n}
(\|\Gamma_{i}\| + V(\Gamma_{i}))\, c^{\delta(\Gamma_{i})},
\end{equation}
where the maximum is taken over all collections of contours
 $ \{ \Gamma_{1}, \ldots , \Gamma_{n}\} \in
{\cal P}(V)$   (we drop the index $q$). Let $ \{ \Gamma_{i_{1}},
\ldots , \Gamma_{i_{k}}\}\subseteq \{ \Gamma_{1}, \ldots ,
\Gamma_{n}\}$ be the collection of all external contours among $ \{ \Gamma_{1}, \ldots , \Gamma_{n}\}$.
Then
\begin{align}
\label{eqtmf50} \sum_{i=1}^{n} (\|\Gamma_{i}\| + V(\Gamma_{i}))\,
c^{\delta(\Gamma_{i})} & \leq \sum_{t=1}^{k} (\|\Gamma_{i_{t}}\| +
V(\Gamma_{i_{t}}))\, c^{\delta(\Gamma_{i_{t}})}\\
&\quad + \sum_{t=1}^{k} V(\Gamma_{i_{t}}) \varphi(\Int \,
\Gamma_{i_{t}}). \nonumber
\end{align}
For any integer $d$ denote
\begin{equation}
\label{psi}
\psi(d)=\max_{V: \, \delta(V)\leq d} \varphi(V).
\end{equation}
Since
\begin{align}
\label{psi1}
 \sum_{t=1}^{k} (\|\Gamma_{i_{t}}\| + V(\Gamma_{i_{t}}))&\leq \|V\|, \\
 \delta(\Gamma_{i_{t}}) & \leq \delta(V)-1 \notag
\end{align}
and $\delta(\Int \Gamma_{i_{j}})\leq \delta(V)-1,$ it follows from
(\ref{eqtmf50}) that
\begin{align}
\label{psi2}
\varphi(V) & \leq c^{\delta(V)-1} + \psi (\delta (V) -1)
\intertext{i.e.,} \psi(d)& \leq c^{d-1} + \psi(d-1), \label{psi3}
\end{align}
and hence
\begin{equation}
\label{eqtmf51} \psi(d) \leq c^{d-1}+c^{d-2}+ \cdots + 1=
\frac{c^{d}-1}{c-1}.
\end{equation}

Next lemma is basic for the PS-theory of the laminated models
(cf.~\cite{PS}).
\begin{lemma}
\label{ltmf41a} 
For the Hamiltonian $H(\mu)$ generated by a horizontal Hamiltonian
(\ref{perturb}) and the vertical Hamiltonian 
 \reff{Hvert}  for which (\ref{b4})
holds, there exist a unique
point $a=(a_{1}, \ldots , a_{r})\in O_{r}$ and  periodic
$uv$-functionals $F_{q}, \, 1 \leq q \leq r$, such that
\begin{align}\label{main1}
 \Xi(\Gamma^{q}|\beta H)&= \exp(a_{q} V(\Gamma^{q})) \Xi
(\Gamma^{q}| F_{q}), \\
\label{main2} a_{q} &= \beta \tilde h_{q} - s(F_{q}) + C,
\end{align}
where the constant $C$ is defined from the relation $\min_{q}a_{q}=0$
and does not depend on $q$.
\end{lemma}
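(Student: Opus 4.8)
The plan is to construct the functionals $F_q$ by the inductive scheme of \cite{PS}, adapted to $\pi\tau$-functionals, and to fix the parameters $a=(a_1,\dots,a_r)$ by a fixed-point argument that disposes of the self-referential appearance of $s(F_q)$. First I would turn \reff{main1} into an explicit recursion: feeding \reff{eqtmf23} and \reff{eqtmf23b} into \reff{main1}, using that by \reff{eqtmf22}, \reff{1.32} and the inductive hypothesis \reff{main1} applied to the strictly smaller contours lying in $\Int_m\Gamma^q$ one has $\Xi^m(\Int_m\Gamma^q|\beta H)=\Xi(\Int_m\Gamma^q|F_m,a_m)$, then expanding the logarithms by \reff{eqtmf29} and \reff{1.33} and inserting the decomposition \reff{expan} of $\beta H(\Gamma^q)$, one gets
$$F_q(\Gamma^q)=\beta\Psi(\Gamma^q)+\sum_m\bigl[\beta(\tilde h_m-\tilde h_q)+a_q-a_m+s(F_q)-s(F_m)\bigr]V_m(\Gamma^q)+\sum_m\bigl[\Delta(\Int_m\Gamma^q|F_q)-\Delta(\Int_m\Gamma^q|F_m,a_m)\bigr].$$
For $F_q$ to be a boundary functional (with no volume growth) the coefficient of each $V_m(\Gamma^q)$ must vanish, i.e.\ $a_q-\beta\tilde h_q+s(F_q)$ must be independent of $q$; writing this common value as $-C$ gives exactly \reff{main2}. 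With \reff{main2} imposed the recursion collapses to $F_q(\Gamma^q)=\beta\Psi(\Gamma^q)+\sum_m\bigl[\Delta(\Int_m\Gamma^q|F_q)-\Delta(\Int_m\Gamma^q|F_m,a_m)\bigr]$, whose right-hand side involves only contours strictly smaller than $\Gamma^q$ (since $\delta(\Int_m\Gamma^q)\le\delta(\Gamma^q)-1$ by \reff{psi1}, using $d+1\ge2$ so that the exterior of $\Gamma^q$ is the unique infinite component); hence, once $a$ is known, the induction on $\|\supp\Gamma^q\|$ is well founded, with base case $F_q=\beta\Psi$ when $\Int\Gamma^q=\emptyset$.

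The step I would be most careful about is checking, at every level of the induction, that $F_q$ is again a $uv$-functional. Here I would use the a priori margin \reff{b4}, which says $\beta\Psi$ is a $(u+1)v$-functional, to absorb the correction: bound $|F_q-\beta\Psi|$ by estimating the interior boundary terms through \reff{eqtmf41} — this bound requires $a_m\ge0$, which is why $a$ must be sought in the octant $O_r$ — and their differences through Lemmas~\ref{ltmf35} and~\ref{ltmf34}, which produce a factor $\frac1{c-1}+\gamma(\tau;c)$ carrying an extra gain $c^{-1}$ from $\delta(\Int_m\Gamma^q)\le\delta(\Gamma^q)-1$. Choosing $c$ large and then $\tau$ large — the latter achieved, for the given $\beta$, by taking the aggregation size $l$ and afterwards $\lambda$ large enough that \reff{br}, \reff{bl}, hence \reff{b4}, hold — makes this correction smaller than the ``$+1$'', so $F_q$ stays a $uv$-functional; periodicity of $F_q$ under $\wh\Z\times\Z$ is inherited from that of $H(\mu)$ because the recursion is equivariant.

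Finally, to pin down $a$ and remove the circular dependence of the recursion on $s(F_q)$, I would run a fixed point on the complete metric space of $r$-tuples $(F_1,\dots,F_r)$ of periodic $uv$-functionals in a small $|\cdot|_c$-ball around $\beta\Psi$: the map first recomputes $a_q=\beta\tilde h_q-s(F_q)+C$ with $C$ fixed by $\min_q a_q=0$ (so automatically $a\in O_r$), and then replaces each $F_q$ by the output of one full pass of the recursion above. By \reff{1.26} the dependence of $s$, hence of the recomputed $a$, on the $F$'s is Lipschitz with small constant, and by Lemma~\ref{ltmf34} so is the dependence of the new $\Delta$-terms; combined with the $c^{-1}$ gain this makes the map a contraction with constant $\le\frac1{c-1}+\gamma(\tau;c)<1$. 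Its unique fixed point provides periodic $uv$-functionals $F_q$ and a point $a\in O_r$ satisfying \reff{main1} and \reff{main2}, and uniqueness of the pair $(a,(F_q))$ follows from uniqueness of the fixed point.

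The hard part will be keeping the genuine $uv$ \emph{lower} bound \reff{uv} — not merely an upper bound on $|F_q|$ — valid throughout the induction, uniformly along the iteration of the fixed-point map, while simultaneously making sure the recomputed parameter $a$ never leaves the region of $O_r$ where the interior partition functions, and hence the whole recursion, stay under control; this is exactly what the interplay between the Peierls margin in \reff{b4} and the smallness of $\frac1{c-1}+\gamma(\tau;c)$ is designed to guarantee, which is why the constants must be chosen in the order $\tau$ (hence $u,v$) large, then $c$ large, then $l$ and then $\lambda$ large depending on $\beta$.
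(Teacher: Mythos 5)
Your proposal is correct and follows essentially the same route as the paper: it derives the same closed system (the recursion \reff{eqtmf56p} together with \reff{main2} obtained by forcing the volume coefficients to vanish) and solves it by a contraction argument in the space of $\wt\Z$-invariant $uv$-functionals with the $|\cdot|_c$ metric, using \reff{eqtmf42}, \reff{1.26} and \reff{b4} to show the map preserves the $uv$-class and is Lipschitz with small constant. The extra bookkeeping you add (the well-founded induction on contour size and the explicit cancellation giving \reff{main2}) is consistent with, and merely spells out, what the paper leaves implicit.
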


This lemma gives us the map $J(\beta, \lambda)$ declared by Theorem \ref{thz}.

\Proof
The Hamiltonian  $H$ is invariant with respect to
the group $\wt \Z = \widehat \Z \times l\Z$. Hence any solution
$F_{q}$ of \reff{main1} is invariant with the respect to
$\wt \Z$.

Substituting  (\ref{main1}) in (\ref{eqtmf22}) and
(\ref{eqtmf23}) gives
\begin{equation}
\label{eqtmf56p} F_{q}(\Gamma^{q}) = \beta \Psi (\Gamma^{q}) -
\sum_{m}\Delta(\Int_{m}\Gamma^{q}|F_{m},a_{m})+
\nabla(\Gamma^{q}|F_{q}),
\end{equation}
where $\nabla(\Gamma^{q}|F_{q})=
\sum_{m}\Delta(\Int_{m}\Gamma^{q}|F_{q})$. It is supposed that $a_{q}$ are defined by (\ref{main2}).
Equations (\ref{eqtmf56p}) and (\ref{main2}) form a closed system
of equations for contour functionals $F_{q}, 1 \leq q \leq
r$.

We prove that there exists a unique solution of this system of equations in
the class of $\wt\Z $-invariant $uv$-functionals.

We solve the equations by the method of successive approximations.
To this end we introduce a space of vector-functionals ${\cal
B}(u,v)=\widehat F = \{(F_{1},\ldots ,F_{r})\}$ of $\wt \Z$-invariant
$uv$-functionals $F_{q}, 1 \leq q \leq r,$ with the metric
$$
| \widehat F - \widehat F' | _{c} = \max_{1 \leq q \leq r} |  F_{q} - F'_{q} |
_{c}.
$$

Then we can consider the right-hand side of (\ref{eqtmf56p})
with $a_{m}$ given by (\ref{main2}) as a correspondence between a
collection of  $\wt \Z $-invariant $uv$-functionals
 $F_{q}, 1 \leq q \leq r,$ and a new collection of functionals.
Namely,
introduce the following notation
\begin{align}
T(\widehat F| \beta \tilde h)&=-
\sum_{m}\Delta(\Int_{m}\Gamma^{q}|F_{q},a_{m})+
\nabla(\Gamma^{q}|F_{q}), \notag \\
\label{eqtmf59p} S(\widehat F | \beta \widehat\Psi , \beta \tilde h) &=
\beta \widehat \Psi  + T(\widehat F| \beta \tilde h).
\end{align}
Then (\ref{eqtmf56p}) becomes
\begin{equation}
\label{eqtmf58p} \widehat F = S(\widehat F | \beta \widehat\Psi , \beta \tilde
h).
\end{equation}


Hence from (\ref{eqtmf42}), (\ref{main2}) and (\ref{1.26}) we have
\begin{align}
\label{eqtmf60p} | T(\widehat F| \beta \tilde h) - T(\widehat F'| \beta
\tilde h') | _{c}& \leq 2 \Bigl( \frac{1}{c-1} + \gamma
(\tau;c)\Bigr)
| F_{q} - F'_{q} | _{c}\\
& \quad + 2 \beta |\tilde h-\tilde h'| + 2 \gamma(\tau;c) |
F_{q} - F'_{q} | _{c}. \nonumber
\end{align}
Let $c=13$ and $\gamma(\tau;c) < 1/12$. Then $T(\widehat F| \beta
\tilde h)$ is  Lipshitz in $\widehat F$ with the constant
$1/2$ if $\tilde h$ is fixed. Moreover,
$$
| T(\widehat F| \beta \tilde h)|_{c} < \infty .
$$
From (\ref{eqtmf29a}) it follows that if $4(d+1)\varepsilon(\wt\Z, \tau)< 1$
then $S$ maps a collection of $uv$-functionals to a collection of $uv$-functionals.
Hence the map $S(\widehat F | \beta \widehat\Psi , \beta \tilde h)$ has a
unique fixed point  $\widehat F$ for which equations (\ref{main1}), (\ref{main2})
are satisfied.
\qed

We remind that the Peierls condition is satisfied for the Hamiltonian
$H^{0}$. We assume also that for a Hamiltonian
$$
H_{g}=H^{0} + \mu_{1} H_{1}+ \cdots +  \mu_{r-1} H_{r-1}
$$
the vector of parameters $\mu = (\mu_{1}, \ldots , \mu_{r-1})$
belongs to a neighborhood $U$ of the origin, so that inequality (\reff{b4})
holds true.

It follows from (\ref{eqtmf60p}) that the fixed point $\widehat F$ of
the transformation $S(\widehat F | \beta \widehat\Psi , \beta \tilde h)$
continuously depends on $\beta \widehat\Psi $ and $\beta \tilde h$.
More precisely,
\begin{equation}
\label{eqtmf61p} \frac{1}{2} | \widehat F - \widehat F' | _{c} \leq \beta
| \widehat\Psi  - \widehat\Psi ' | _{c} + 2 \beta |\tilde h-\tilde h'|.
\end{equation}

Given $\beta > 0$, choose $l$ and $\lambda$ according to (\reff{br}) and
(\reff{bl}). Then the Hamiltonian $\beta H$ satisfies conditions of
Lemma~\ref{ltmf34}.
 Thus, each point $\mu$ in $\R^{r-1}$ corresponds to some point
 $a$ in $O_{r}$. We set $a= J(\beta ,
\lambda) \mu$. Then the coordinates $a_{q}, \, 1 \leq q \leq r,$ of $a$
are determined by the formula (\ref{main2}),
where  $\tilde h_{q} = l h_{q}$, $h_{q}=\sum e_{k}^{q}\mu_{k}$.

Conversely, given $a=\{a_{q}, \, 1 \leq q \leq r\},$ we can find
$\mu = \{ \mu_{1}, \ldots , \mu_{r-1} \}$. To this end,
let us rewrite (\ref{main2}) as
\begin{equation}
\label{eqtmf64} \beta \tilde h_{q} = a_{q} + s(F_{q}) + \const ,
\end{equation}
where $F_{q}, \, 1 \leq q \leq r,$ are determined by $\beta \tilde
h_{q}, \, 1 \leq q \leq r$.
We see from (\ref{eqtmf60p}) that the functions $F_{q}, \, 1 \leq q \leq r,$
are Lipshitz on $\beta \tilde h_{q} = \beta l h_{q}, \, 1 \leq q \leq r,$.
Since $s(F_{q})$ is Lipshitz on $F_{q}$ with a small Lipshitz constant,
given $a_{q}, \, 1 \leq q \leq r$, it is possible to find
$\beta \tilde h_{q}, \, 1 \leq q \leq r,$
iterating (\ref{eqtmf64}).

Rewriting (\ref{eqtmf64}) as
$$
h_{q}= \frac{1}{\beta l} (a_{q}+s(F_{q})) + \const
$$
we see that the iterations do not escape from $U$ if $1/\beta
l$ is small enough and $\max_{q}a_{q}$ is small enough as well.
So the map $J(\beta ,\lambda)$ is a homeomorphism onto the neighbourhood
of the origin in $O_{r}$. The existence of pure thermodynamic phases
corresponding to those $q$ for which $a_{q}=0$ easily follows from
Lemma \ref{ltmf41a}. Namely, if $a_{q}=0$ then equation (\ref{main1})
gives the equality of the partition functions of the laminated model and
the contour model. Hence the distributions of the outer contours for these two
models also coincide and the existence of the limit Gibbs distribution follows
(see \cite{PS}).

\section{$\boldsymbol{(1+1)}$-laminated models}
\label{sec1d}

\subsection{Ground states of one-dimensional models}

A collection $\mathcal{A}$ of finite subsets of $\Z$ is called
\textit{a set of patterns\/} if for every $A\in\mathcal{A}$
\begin{enumerate}
  \item[1)] there exists $N$ such that $A\subseteq[0,N]$,
  \item[2)] $\min{A}=0$.
\end{enumerate}

We define a formal Hamiltonian of a one-dimensional lattice model by finite range potential functions
\begin{equation}\label{poten}
\varphi_{A}(s_{A}), \;A\in\mathcal{A},
\end{equation}
as
\begin{equation}\label{hams}
H(s)=\sum_{A\in\mathcal{A}}\sum_{x\in\Z}\varphi_{A}(s_{A+x}).
\end{equation}


As above
for any pair of configurations $s',s'' \in {\cal S}$ which are equal almost everywhere
we define the relative Hamiltonian $H(s',s'')$ (\ref{hamr}).
We reduce the model to an equivalent one by a process we call
\textit{coarse-graining}. A coarse-grained model has a pair
interaction between the nearest neighbour (coarse-grained) spins only.

To construct the coarse-grained model we divide $\Z$ into non-intersecting
blocks of the size $N$ which cover $\Z$. That is,
$\Z=\bigcup_{i\in\Z}[k_{i},k_{i}+N)$ assuming $k_i=iN$ and
$k_{0}=0$. Assign to the block $[k_{i},k_{i}+N)$ the number $i$
and consider the set ${\cal S}^{(i)}=\{s^{(i)}:\,[k_{i},k_{i}+N)\to
S\}$ of spin configurations  on $[k_{i},k_{i}+N)$. It is clear
that ${\cal S}^{(i)}$ are isomorphic for different $i$. We define
 the spin space of the coarse-grained model as $\wt{S}={\cal S}^{0} $.
The single spin energy of the
coarse-grained model is
\begin{equation}\label{blocken}
\wt{\varphi}_{1}(\s)=\sum_{\begin{smallmatrix}A,x:\:A\in\mathcal{A}\\
A+x\subseteq[0,N)
\end{smallmatrix}}
\varphi_{A}(\wt{s}_{A}).
\end{equation}
To define the potential $\wt{\varphi}_{2}$ of two neighboring
block-spins $\s$ and $\s\ '$ assume that $\s$ is defined on
$[0,N)$ and  $\s\ '$ is defined on $[N,2N)$. Then
\begin{equation}\label{blockint}
\wt{\varphi}_{2}(\s,\s\ ')=
\sum_{\begin{smallmatrix}A,x:\:A\in\mathcal{A},A+x\subseteq[0,2N)\\
A+x\cap[0,N)\neq\varnothing, A+x\cap[N,2N)\neq\varnothing
\end{smallmatrix}}\varphi_{A}(\pr (\s\vee \s\ ',A+x)),
\end{equation}
where $\s\vee \s\ '$ means concatenation of $\s$ and $\s\ '$, and
$\pr (\s\vee \s\ ',A+x)$ is the restriction of $\s\vee \s\ '$ to
$A+x$.

For the coarse-grained model,
the only single-site and neighboring two-sites energy is non-zero.
We define
$\ovl{\varphi}_{2}(\s_{1},\s_{2})=\wt\varphi_{2}(\s_{1},\s_{2})+\wt\varphi_{1}(\s_{1})$.
Remark that generally $\ovl{\varphi}_{2}(\s_{1},\s_{2})\neq
\ovl{\varphi}_{2}(\s_{2},\s_{1})$.

\begin{theorem}
\label{ground}
Any one-dimensional model with a finite-range potential has
periodic ground states.
\end{theorem}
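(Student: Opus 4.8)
The plan is to exploit the coarse-graining reduction already set up in the excerpt: after coarse-graining, the model has only a single-site energy $\wt\varphi_1$ and a nearest-neighbour energy $\wt\varphi_2$ on the new spin space $\wt S$, which is finite. So it suffices to prove that a one-dimensional nearest-neighbour model on a finite spin space has a periodic ground state. First I would absorb $\wt\varphi_1$ into $\ovl\varphi_2$ as the excerpt does, so that the entire (formal) Hamiltonian is $H(s)=\sum_{x\in\Z}\ovl\varphi_2(s(x),s(x+1))$, with $\ovl\varphi_2$ a function on the finite set $\wt S\times\wt S$.

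Next I would reformulate the ground-state problem as a shortest-path / minimal-mean-cycle problem on the finite directed graph $G$ whose vertices are $\wt S$ and whose edge from $a$ to $b$ carries weight $\ovl\varphi_2(a,b)$. A periodic configuration of period $p$ corresponds to a closed walk of length $p$ in $G$, and its energy per site is exactly the mean weight of that cycle. The key classical fact (Karp's minimum-mean-cycle theorem, or just a compactness argument since $G$ is finite) is that there exists a cycle $\kappa$ in $G$ minimising the mean edge weight; let $e^\ast$ denote this minimal mean weight. Traversing $\kappa$ repeatedly gives a periodic configuration $a$ with specific energy $e^\ast$.

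Then I would verify that $a$ is a ground state in the sense of the paper, i.e. that $H(s,a)\ge 0$ for every $s$ equal to $a$ almost everywhere. Write $s$ as agreeing with $a$ outside a finite interval $[m,n]$. The relative energy $H(s,a)$ is a finite sum of differences of $\ovl\varphi_2$-terms; grouping them, $H(s,a)$ equals the weight of the finite walk in $G$ that $s$ traces on $[m,n]$ (from the vertex $a(m-1)$-successor to the vertex preceding $a(n+1)$) minus the weight of the corresponding segment of the periodic walk of $a$. Since every edge weight can be shifted by the standard potential transformation $\ovl\varphi_2(x,y)\mapsto \ovl\varphi_2(x,y)-e^\ast + f(x)-f(y)$ (a ``reduced cost'') without changing relative energies, and since for a mean-optimal cycle one can choose $f$ so that all reduced costs are non-negative (this is the linear-programming duality / Bellman-potential step underlying Karp's theorem), the reduced weight of any walk with matching endpoints on the periodic orbit is non-negative. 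Hence $H(s,a)\ge 0$.

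The main obstacle is the endpoint bookkeeping in the last step: one must be careful that $s$ and $a$ agree on a genuinely cofinite set and that the ``detour'' walk traced by $s$ starts and ends at the same vertices as the excised segment of $a$'s periodic orbit, so that the reduced-cost telescoping $\sum (f(x_i)-f(x_{i+1}))$ collapses correctly and the boundary terms cancel. Once the right potential $f$ is chosen (from a shortest-path tree in $G$ with respect to the edge weights $\ovl\varphi_2(x,y)-e^\ast$, which has no negative cycle precisely because $e^\ast$ is the minimal mean weight), the inequality $H(s,a)\ge 0$ is immediate, and periodicity of $a$ is built in by construction. I would also remark that $\wt S$ being finite is what makes the minimum-mean-cycle exist, and that unravelling the coarse-graining turns the period-$p$ block-spin orbit into a genuinely periodic configuration of the original model with period $pN$, completing the proof.
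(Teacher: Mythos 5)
Your proposal is correct and its core is the same as the paper's: coarse-grain to a nearest-neighbour model on the finite block-spin space $\wt{S}$, pass to the complete directed graph on $\wt{S}$ with edge weights $\ovl{\varphi}_2$, and exhibit a periodic configuration by winding around a cycle of minimal mean weight. Where the paper gets that cycle from its Lemma~\ref{minimcyc} (decompose any cycle into irreducible ones, so the minimum over irreducible cycles is the global minimum), you invoke the minimum-mean-cycle theorem directly; these are interchangeable. The genuine difference is that you carry the argument one step further than the printed proof. The paper stops at ``there exists a periodic configuration generating a cycle with minimal specific energy'' and does not check that this configuration satisfies the actual definition of a ground state, namely $H(s,a)\ge 0$ for every $s$ equal to $a$ almost everywhere; minimality of the specific energy among periodic configurations does not by itself exclude a finite perturbation of negative relative energy. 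Your reduced-cost argument supplies exactly this verification: choose a Bellman potential $f$ with $\ovl{\varphi}_2(x,y)-e^{\ast}+f(x)-f(y)\ge 0$ (possible because $e^{\ast}$ being the minimal cycle mean is equivalent to the absence of negative cycles for the shifted weights, and the complete graph is strongly connected), observe that the edges of the optimal cycle then all have reduced cost exactly zero (nonnegative terms summing to zero), and telescope over the two equal-length, equal-endpoint walks to get $H(s,a)\ge 0$. Your endpoint bookkeeping is handled correctly, since $s$ and $a$ agree on a cofinite set and both walks join $a(m-1)$ to $a(n+1)$ in the same number of steps, so the $e^{\ast}$ and $f$ contributions cancel. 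In short, your write-up closes a step the paper leaves implicit, at the modest price of importing the linear-programming-duality/Bellman-potential machinery.
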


\textit{Proof.} 

Let $G=(V,E)$ be a complete oriented graph with the finite vertex
set $V$ (we will need $V=\wt S$).  Assume
that every edge $e\in E$ is supplied with an energy value
$\ovl\varphi(e)$, that is,
\begin{equation}\label{energ}
\ovl\varphi:\: E\to \R.
\end{equation}

We define a \textit{path} on $G$ as a  sequence of edges
$W=(e_{1},\ldots,e_{n},\dots)$ such that the initial vertex of
$e_{i+1}$ is the final vertex of $e_{i}$ for all $i$.

A \textit{cycle} $C$ is a path
$C=(e_{1},\ldots ,e_{n})$ such that the final vertex of $e_{n}$ is the
initial vertex of $e_{1}$.   The specific energy of a cycle $C$ is defined by
\begin{equation}\label{sepc}
h(C)=\frac{1}{n}\sum_{i=1}^{n}\varphi(e_{i}).
\end{equation}
Our goal is to find a cycle with minimal specific energy.

A cycle $C=(e_{1},\ldots ,e_{n})$ is called \textit{irreducible} if all
its vertices are different.
 Let $\ovl{C}$ be an
irreducible cycle having the minimal specific energy among all
irreducible cycles. The cycle $\ovl{C}$ exists, since the set of
irreducible cycles is finite.
\begin{lemma}\label{minimcyc}
The irreducible cycle having the minimal specific energy among all
irreducible cycles has the minimal specific energy among all cycles.
\end{lemma}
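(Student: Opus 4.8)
The plan is to show that any cycle can be decomposed into irreducible cycles, and that its specific energy is then a convex combination (weighted average) of the specific energies of those irreducible pieces; hence it cannot beat the best irreducible cycle.

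First I would argue the decomposition. Let $C=(e_1,\ldots,e_n)$ be an arbitrary cycle and trace its vertices $v_0,v_1,\ldots,v_n=v_0$ in order. If all of $v_0,\ldots,v_{n-1}$ are distinct, then $C$ is already irreducible and there is nothing to prove. Otherwise there are indices $i<j$ with $v_i=v_j$ and $j-i<n$; the segment $(e_{i+1},\ldots,e_j)$ is a closed sub-walk returning to $v_i$, and among all such repetitions I pick one that is minimal, so that the vertices $v_i,v_{i+1},\ldots,v_{j-1}$ are pairwise distinct. That segment is an irreducible cycle $C'$ of some length $m=j-i$. Removing the edges $e_{i+1},\ldots,e_j$ from $C$ and splicing $v_i$ directly to the continuation leaves a shorter closed walk $C''$ of length $n-m$ starting and ending at $v_0$. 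Iterating this surgery (induction on $n$) expresses $C$ as a disjoint union, as a multiset of edges, of irreducible cycles $C_1,\ldots,C_k$ with lengths $n_1,\ldots,n_k$ summing to $n$.

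Next I would do the averaging computation. Since $\sum_{i=1}^n \overline\varphi(e_i) = \sum_{t=1}^k \sum_{e\in C_t}\overline\varphi(e) = \sum_{t=1}^k n_t\, h(C_t)$, dividing by $n=\sum_t n_t$ gives
\begin{equation*}
h(C) = \frac{\sum_{t=1}^k n_t\, h(C_t)}{\sum_{t=1}^k n_t},
\end{equation*}
a convex combination of the $h(C_t)$. Each $C_t$ is irreducible, so $h(C_t)\geq h(\overline C)$ by the choice of $\overline C$ as the minimizer over irreducible cycles; therefore $h(C)\geq h(\overline C)$. Since $\overline C$ itself is a cycle, this shows $\overline C$ attains the minimum of $h$ over all cycles, which is exactly the claim.

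The only delicate point — the ``main obstacle'' — is being careful that the recursive extraction of a minimal repeated-vertex segment really does yield an irreducible cycle at each stage and that the leftover walk $C''$ is still a genuine closed walk (so the induction hypothesis applies); once the bookkeeping of vertices and edges is set up cleanly, the energy identity and the convexity conclusion are immediate. I would also note in passing that the lemma does not require the orientation or completeness of $G$ beyond what is already assumed; only that the edge set is finite, which guarantees $\overline C$ exists.
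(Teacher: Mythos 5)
Your proposal is correct and follows essentially the same route as the paper: decompose an arbitrary cycle into irreducible cycles and observe that its specific energy is the length-weighted average of theirs. The only difference is that you spell out the extraction of a minimal repeated-vertex segment, a decomposition step the paper simply asserts.
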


\proof Any cycle $C$ with its length $n$ can be expanded to a
finite number $k$ of irreducible cycles $C_{1},\ldots ,C_{k}$ with 
lengths $n_{1},\ldots , n_{k}$, respectively, $n=\sum_{i} n_{i}$. Thus 
\begin{equation}\label{hhh}
h(C)=\sum\frac{n_{i}}{n}h(C_{i}).
\end{equation}
Since $h(C_{i})\geq h(\ovl{C})$ for any $C_{i}$, we have $h(C)\geq
h(\ovl{C})$. \qed

\medskip

To finish the proof of Theorem \ref{ground} we introduce the complete
oriented graph with the set of vertices $V=\wt S$ as . Every
configuration $\s:\:\Z\to \wt S$ generates a \textit{path}
$W=(\ldots ,e_{1},\ldots )$ in the graph such that
$e_{1}=\langle \s(0), \s(1)\rangle,\;\;
e_{2}=\langle \s(1), \s(2)\rangle$  etc. Any periodic
configuration $s$ generates a cycle of a finite length. Thus
there exists a periodic configuration generating a cycle with
minimal specific energy. \qed

\medskip

%
%
%
Further we consider models with finite number of ground states.
Making the additional coarse-graining we can count that all the
 ground states of our model are configurations with period
$1$, i.e.\ are constant configurations. Without
loss of generality we can assume that the specific
energy of the ground states is zero.


\subsection{Peierls condition}

Let $H$ be the Hamiltonian of a one-dimensional model having finite number of the
ground states, and the ground states are constant configurations.
Then there is a set ${Q}=\{q_{1},\ldots ,q_{r}\}\subseteq \wt S$ such that for any $k=1, \ldots ,
r$ the configuration $\s_{k}(i)\equiv q_{k}$ is a ground state,
and the specific energy $h(\s_{k})=0$.  We say that 
 a site $i$ is \textit{regular} with respect to configuration $\s$ if
$\s(i-1)=s(i)=s(i+1)\in Q$. Otherwise $i$ is a 
\textit{boundary} site.

\begin{proposition}[Peierls condition]\label{P}
There exists a positive constant $c$ such that for any ground state
$\s_{q}(i)\equiv q\in Q$ and any configuration $\s$
 that is equal to $\s_{q}$ almost everywhere and has $n$ boundary sites
it holds that
\begin{equation}\label{peierls}
H(\s,\s_{q})\geq cn .
\end{equation}
\end{proposition}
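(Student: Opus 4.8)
The plan is to reduce to a purely local estimate by exploiting the coarse-grained structure. After the coarse-graining described above, the Hamiltonian has only single-site energies $\wt\varphi_1$ and nearest-neighbour energies $\wt\varphi_2$, collected into $\ovl\varphi_2(\s_1,\s_2)$, and the relative Hamiltonian $H(\s,\s_q)$ is a finite sum of differences $\ovl\varphi_2(\s(i),\s(i+1))-\ovl\varphi_2(q,q)$ over the (finitely many) bonds on which $\s$ differs from $\s_q$. Since $\s$ agrees with $\s_q$ outside a finite set, only bonds incident to boundary sites contribute. So the whole problem is to show each such bond contributes on average at least a fixed positive amount per boundary site.

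The key step is to introduce, for each ordered pair $(a,b)\in\wt S\times\wt S$, the ``excess energy'' $w(a,b)=\ovl\varphi_2(a,b)-0$ relative to the zero specific energy of the ground states, and to reformulate $H(\s,\s_q)$ as a sum of $w$ over the bonds of the path generated by $\s$ that lie in the affected region, minus the same number of $w(q,q)=0$ terms. Using Lemma~\ref{minimcyc} and Theorem~\ref{ground}: every cycle has specific energy $\ge 0$ (the ground-state specific energy), so every closed loop in the weighted graph $G$ has nonnegative total $w$-weight. Now I would argue that the path of $\s$, starting in the ground state $q$ and returning to $q$, decomposes into a ground-state-constant part plus finitely many ``excursions'' away from $Q$; each excursion, being a path from $q$ back to some ground state, can be closed up into a cycle by appending ground-state edges of zero weight, hence has nonnegative total weight. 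The strictly positive lower bound $cn$ then comes from the fact that there are only finitely many edges $e$ with $w(e)=0$ — precisely the ``ground-state'' edges among vertices of $Q$ (and by the assumption that the ground states are isolated / finite in number, any edge touching a non-$Q$ vertex, or a $Q$-to-$Q$ edge not of the form $(q,q)$, has $w(e)>0$); so I set $c$ to be a suitable fraction of the minimum of $w(e)$ over this finite set of ``bad'' edges.

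Concretely, the steps in order: (i) pass to the coarse-grained model so that all ground states are constants $q\in Q$ with zero specific energy and interactions are nearest-neighbour; (ii) write $H(\s,\s_q)=\sum_{i}\big(\ovl\varphi_2(\s(i),\s(i{+}1))-\ovl\varphi_2(q,q)\big)$ over the finite set of bonds where $\s\ne\s_q$, and observe this set is contained in the bonds incident to the $n$ boundary sites; (iii) show $\ovl\varphi_2(q,q)=0$ for $q\in Q$ since $\s_q$ has specific energy $0$; (iv) prove $w(a,b):=\ovl\varphi_2(a,b)\ge 0$ for all edges — otherwise a cycle through that edge, closed using ground-state edges, would have negative specific energy, contradicting Lemma~\ref{minimcyc}; (v) show $w(a,b)>0$ for every edge that is not of the form $(q,q)$, $q\in Q$ — this is exactly the statement that the ground states are the constant configurations on $Q$ and no others; (vi) set $c$ to be the minimum of $w(e)$ over the finitely many ``non-ground'' edges, divided by the (bounded) number of affected bonds per boundary site, and conclude $H(\s,\s_q)\ge cn$ by noting that each of the $n$ boundary sites forces at least one incident bond to be a non-ground edge.

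The main obstacle I anticipate is step (v): showing that a bond energy $w(a,b)$ with $(a,b)\ne(q,q)$ is \emph{strictly} positive, not merely nonnegative. Nonnegativity follows cleanly from the cycle argument, but strict positivity needs the finiteness-of-ground-states hypothesis used carefully — one must rule out ``neutral'' edges $w(a,b)=0$ joining vertices outside $Q$, or connecting distinct elements of $Q$, which could otherwise be chained into infinitely many non-constant ground-state configurations. The way around this is to note that if such a neutral edge existed one could build a periodic configuration of zero specific energy that is not one of $\s_{q_1},\dots,\s_{q_r}$ (either by iterating the edge within an irreducible zero-weight cycle not reducing to a single vertex of $Q$, or by using it as a ``bridge''), contradicting that $Q$ exhausts the ground states after coarse-graining; possibly a further round of coarse-graining is invoked to absorb such periodic configurations, matching the remark made just before the Peierls subsection. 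Once strict positivity of all non-ground edges is in hand, the counting in step (vi) is routine and the constant $c$ is explicit.
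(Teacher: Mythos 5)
Your reduction to the coarse-grained nearest-neighbour model and the identification of $H(\s,\s_q)$ with the weight of a closed walk in the graph $G$ are fine, but steps (iv) and (v) contain a genuine gap that breaks the edge-by-edge accounting. It is simply not true that $w(a,b)=\ovl{\varphi}_{2}(a,b)\geq 0$ for every edge: the cycle argument only controls \emph{sums of weights around closed loops}, and the zero-weight ``ground-state edges'' are the loops $(q,q)$ at a single vertex, so they cannot be used to close an arbitrary edge $(a,b)$ with $a\neq b$ into a cycle. A negative edge can perfectly well coexist with the hypothesis that all cycles have nonnegative specific energy, as long as every return path compensates it (e.g.\ $w(q,b)=-1$, $w(b,q)=+5$, $w(b,b)$ large). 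Consequently the constant $c$ in your step (vi), defined as a minimum of single-edge weights over ``non-ground'' edges, can be negative, and the bound $H(\s,\s_q)\geq cn$ obtained by summing edge contributions collapses. The difficulty you flag in your last paragraph (strict positivity in step (v)) is real, but the more basic problem is that even nonnegativity fails at the level of individual edges; no amount of extra coarse-graining fixes this, because it is a feature of which object carries a sign, not of the block size.

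The paper avoids this by choosing the irreducible cycle, not the edge, as the unit of accounting: the closed walk generated by $\s$ (which starts and ends at $q$) is decomposed by Lemma~\ref{minimcyc} into irreducible cycles; each irreducible cycle other than a loop $(v,v)$, $v\in Q$, has \emph{strictly positive} energy (else it would generate an extra periodic ground state), and since there are only finitely many irreducible cycles one gets a uniform $\varepsilon>0$ with $H(C)>\varepsilon N(C)$ for each of them, where $N(C)$ counts the boundary sites it produces; summing over the decomposition yields \reff{peierls}. If you want to salvage an edgewise argument you would need an additional idea, e.g.\ a vertex potential $\pi:\wt S\to\R$ reweighting $w(a,b)$ to $w(a,b)+\pi(a)-\pi(b)\geq 0$ (which exists precisely because all cycles have nonnegative weight, and which leaves all cycle sums unchanged); but that is a different argument from the one you wrote, and you would still have to handle edges lying on zero-weight cycles by the same ``no extra periodic ground states'' reasoning.
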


\begin{proof}
By coarse-graining construction the ground states correspond to cycles
of the form $(v,v)$, where $v \in Q$. So for any other irreducible cycle $C$
its energy $H(C)$ is strictly positive. Denote by $N(C)$ the number of boundary sites
of the cycle $C$, then $H(C) > \varepsilon N(C)$ for some positive $\varepsilon$.
From this inequality the Peierls condition for the Hamiltonian $H$ follows.
\end{proof}

\begin{remark}
Return now to the $(d+1)$-dimensional laminated model. In this section we have shown that
for the case $d=1$ any Hamiltonian with the finite
number of periodic ground states satisfies the Peierls condition and so can be used as the
horizontal Hamiltonian $H_{g}$
in the construction of the laminated model. Thus for any $(1+1)$-dimensional
laminated model
with finite number of periodic ground states we obtain the phase diagram.
\end{remark}


\begin{thebibliography}{99}

\bibitem{ger1}
Gercik V.M. Conditions for the nonuniqueness of a Gibbs state for lattice
models with finite interaction potential. Uspehi Mat.\ Nauk. 1975, {\bf 30},
no.\ Э (183), 159--160.

\bibitem{ger2}
Gercik V.M. Conditions for the nonuniqueness of a Gibbs state for lattice
models having finite interaction potentials. Izv.\ Akad.\ Nauk, Ser.\ Mat.,,
1976, {\bf 10}, no.\ 2, 448--462 (in Russian). Translated in Math.\ USSR --
Izv., 1976, {\bf 10}, no.\ 2, 429--443.

\bibitem{ger3}
Gercik V.M. Gibbs states in lattice models with two-step interaction.
Funct.\ Anal.\ Pril. 1974, {\bf 8}, no.~3, 12--25 (in Russian).
Translated in Funct.\ Anal.\ Appl., 1974, {\bf 8}, no.~3, 201--211.


\bibitem{min41}
Minlos R.A., Sinai Ya.G.
The phenomenon of ``separation of phases'' at low temperatures in certain
lattice models of a gas. I. Mat.\ Sb., 1967, {\bf 73}, 375--448
(in Russian).
Translated in Math.\ USSR -- Sb.,  1967, {\bf 2}, no.~3, 335--395.

\bibitem{min51}
Minlos R.A., Sinai Ya.G.
The phenomenon of ``separation of phases'' at low temperatures in certain
lattice models of a gas. II. Trudy Moskov.\ Mat.\ Obsc.,
 1968, {\bf 19}, 143--178. (in Russian).
Translated in Trans.\ Moscow Math.\ Soc.,  1968, {\bf 19}, 121--196.

\bibitem{min61}
Minlos R.A., Sinai Ya.G.
New results concerning phase transitions of the first kind in lattice gas
models. Trudy Moscov.\ Mat.\ Obsc., 1967, {\bf 17}, 213--242.

\bibitem{PS}
Pirogov S.A., Sinai Ya.G. Phase transitions of the first kind for small
perturbations of the Ising model. Funct.\ Anal.\ i Prilozhen.,
 1974, {\bf 8}, no.~1, 25--30. (in Russian).
Translated in Funct.\ Anal.\ Appl.,  1974, {\bf 8}, no.~1, 21--25.

\bibitem{2}
Pirogov S.A. First-order phase transitions for spin models of spin $-1,0,1$.
Dokl.\ Akad.\ Nauk SSSR, 1974, {\bf 214}, no.~6, 1273--1275. (in Russian).
Translated in Sov.\ Math.\ Dokl., 1974, {\bf 15}, 370--373.

\bibitem{3}
Pirogov S.A. Gibbs random fields and the problem of the coexistence
of phases for lattice models of statistical physics.
Uspehi Mat.\ Nauk, 1975, {\bf 30}, no.~2 (182), 223--224.

\bibitem{4}
Pirogov S.A. The existence of lattice models with several types of phases.
Izv.\ Akad.\ Nauk, Ser.\ Mat., 1975, {\bf 39}, no.~6, 1404--1433.

\bibitem{5}
Pirogov S.A., Sinai Ya.G. Phase diagrams of classical lattice systems. I.
Teoret.\ Mat.\ Fiz., 1975, {\bf 25}, no.~3, 358--369. (in Russian).
Translated in Theor.\ Math.\ Phys.,  1975, {\bf 25}, no.~3, 1185--1192.

\bibitem{6}
Pirogov S.A., Sinai Ya.G. Phase diagrams of classical lattice systems. II.
Teoret.\ Mat.\ Fiz., 1976, {\bf 26}, no.~1, 61--76. (in Russian).
Translated in Theor.\ Math.\ Phys.,  1976, {\bf 26}, no.~1, 39--49.

\bibitem{15} Pirogov~S.A. Peierls argument for anisotropic Ising
model. Amer.\ Math.\ Soc.\ Transl., 2000,  Ser.~2, Vol.~198: On
Dobrushin's Way. P.~195.

\end{thebibliography}
\end{document}